\documentclass[11pt]{llncs}
\usepackage{latexsym}
\usepackage{amssymb}
\usepackage{amsmath,amstext}
 \usepackage[pdftex]{graphicx}
\usepackage{owna4}

\makeatletter
\def\@fnsymbol#1{\ensuremath{\ifcase#1\or\star\or \dagger\or \ddagger\or
  {\star\star}\or {\star\star\star}\or
   \mathchar "278\or \mathchar "27B\or \|\or **\or \dagger\dagger
   \or \ddagger\ddagger \else\@ctrerr\fi}}
\makeatother

\pagestyle{plain}
%------------------------------------------------------------------------
% TeX and LaTeX macros
%------------------------------------------------------------------------
%
% In math formulas we use italic instead of mathitalic
%
\makeatletter
% ~ gives a \; space in math mode
\def~{\ifmmode\;\else\penalty\@M\ \fi}
%  Italic in mathematical formulas
\def\@setmcodes#1#2#3{{\count0=#1 \count1=#3
  \loop \global\mathcode\count0=\count1 \ifnum \count0<#2
  \advance\count0 by1 \advance\count1 by1 \repeat}}
\DeclareSymbolFont{italic}{OT1}{\rmdefault}{m}{it}
\let\mathit\undefined
\DeclareSymbolFontAlphabet{\mathit}{italic}
\edef\@tempa{\hexnumber@\symitalic}
\@setmcodes{`A}{`Z}{"7\@tempa41}
\@setmcodes{`a}{`z}{"7\@tempa61}
\makeatother
%
% \begin{asm} ... \end{asm}
%
\newdimen\asmindent     
\asmindent=\parindent
\newcount\asmi
\def\inc{\global\advance\asmi by 1}
\def\dec{\global\advance\asmi by-1}
\def\nl{{}$\par\hangindent\asmi em
  \noindent\kern\asmi em\ignorespaces$} 
\def\asmskip{{}$\par\smallskip\hangindent\asmi em
  \noindent\kern\asmi em\ignorespaces$} 
\def\asm{\global\asmi=0 
 \def\+{\inc\nl}
 \def\-{\dec\nl}
 \def\\{\nl}
 \begin{trivlist}\item[]\leftskip=\asmindent\relax$}
\def\endasm{$\end{trivlist}}
%\mathindent=\asmindent
%
% \begin{asmarray}
%    f(s_1) & := t_1 \\
%    f(s_2) & := t_2 
% \end{asmarray}
%
\def\asmarray{\begin{array}[t]{@{}l@{\;}l@{\;}l@{}}}
\def\endasmarray{\end{array}}
%
%
%

%
%  \begin{subasm} ... \end{subasm}
%
\newcount\asmii
\def\subasm{\vtop\bgroup\asmii=0\normalbaselines
 \def\nl##1{$\egroup\advance\asmii by##1\relax\hbox\bgroup\hskip\asmii em$}
 \def\\{\nl{0}}
 \def\+{\nl{1}}
 \def\-{\nl{-1}}
 \hbox\bgroup\hskip\asmii em$}
\def\endsubasm{$\egroup\egroup}
%
% Keywords in ASM code
%
\def\ASM#1{\hbox{\sc#1}}        % rule names and macros

\def\AND     {\mathrel{\mathbf{and}}}

\def\CHOOSE  {\mathrel{\mathbf{choose}}}

\def\DO      {\mathrel{\mathbf{do}}}
\def\ELSE    {\mathrel{\mathbf{else}}}

\def\FORALL  {\mathrel{\mathbf{forall}}}
\def\FORSOME  {\mathrel{\mathbf{forsome}}}

\def\THEREIS  {\mathrel{\mathbf{thereIs}}}

\def\IF      {\mathrel{\mathbf{if}}}

\def\IN      {\mathrel{\mathbf{in}}}
\def\LET     {\mathrel{\mathbf{let}}}

\def\NOT     {\mathrel{\mathbf{not}}}

\def\OR      {\mathrel{\mathbf{or}}}
\def\PAR     {\mathrel{\mathbf{par}}}
\def\SEQ     {\mathrel{\mathbf{seq}}}
\def\SKIP    {\mathrel{\mathbf{skip}}}
\def\THEN    {\mathrel{\mathbf{then}}}
\def\WHERE   {\mathrel{\mathbf{where}}}

\def\UNTIL   {\mathrel{\mathbf{until}}}

\def\WITH    {\mathrel{\mathbf{with}}}

\def\SEQ    {\mathrel{\mathbf{seq}}}
\def\ITERATE    {\mathrel{\mathbf{iterate}}}

%
% Including figures
%

%
% References to paragraphs in the ECMA standard for C#
%

%
%  Environments for definitions and theorems
%
%\theorembodyfont{\rm}
%\newtheorem{definition}[subsection]{Definition}
%\newtheorem{lemma}[subsection]{Lemma}
%\newtheorem{theorem}[subsection]{Theorem}
%\newtheorem{proposition}[subsection]{Proposition}
%\newtheorem{corollary}[subsection]{Corollary}
%\newtheorem{example}[subsection]{Example}
%\newtheorem{remark}[subsection]{Remark}
%\newtheorem{constraint}[subsection]{Constraint}

%
% enumerate and itemize (smaller skips) from "latex.ltx"
%
\makeatletter
\def\enumerate{%
  \ifnum \@enumdepth >\thr@@\@toodeep\else
    \advance\@enumdepth\@ne
    \edef\@enumctr{enum\romannumeral\the\@enumdepth}%
      \expandafter
      \list
        \csname label\@enumctr\endcsname
        {\usecounter\@enumctr\def\makelabel##1{\hss\llap{##1}}
         \itemsep 0pt\parskip 0pt\parsep 0pt\topsep\smallskipamount}%
  \fi}
\def\itemize{%
  \ifnum \@itemdepth >\thr@@\@toodeep\else
    \advance\@itemdepth\@ne
    \edef\@itemitem{labelitem\romannumeral\the\@itemdepth}%
    \expandafter
    \list
      \csname\@itemitem\endcsname
      {\def\makelabel##1{\hss\llap{##1}}
       \itemsep 0pt\parskip 0pt\parsep 0pt\topsep\smallskipamount}%
  \fi}
\makeatother
%
% items in itemize
%

%
%
%

%
% Positions
%

%
%
%

           % code
           % non terminal symbols
  % terminal symbols
          % ASM rules
                    % dynamic functions
                    % universes
                    % constructors

%
%
% macros.tex ends here

\title{Serialisable Multi-Level Transaction Control:\\ A Specification
  and Verification\thanks{The research reported in this paper results
    from the project \textit{Behavioural Theory and Logics for
      Distributed Adaptive Systems} supported by the \textbf{Austrian
      Science Fund (FWF): [P26452-N15]}. The first author, Humboldt
  research prize awardee in 2007/08, gratefully acknowledges partial
  support by a renewed research grant from the Alexander von Humboldt
  Foundation in 2014.}\thanks{The final publication is available at Elsevier via https://doi.org/10.1016/j.scico.2016.03.008.}\thanks{\copyright 2017. This manuscript version is made available under the CC-BY-NC-ND 4.0 license http://creativecommons.org/licenses/by-nc-nd/4.0/.}}

\author{Egon B{\"o}rger\inst{1}, Klaus-Dieter Schewe\inst{2}, Qing Wang\inst{3}}

\institute{Universit\`{a} di Pisa, Dipartimento di Informatica, I-56125 Pisa, Italy\\
\email{boerger@di.unipi.it} \and
Software Competence Centre Hagenberg, A-4232 Hagenberg, Austria\\
\email{kdschewe@acm.org} \and
Research School of Computer Science, Australian National University, Australia\\
\email{qing.wang@anu.edu.au}}

\begin{document}

\maketitle

\begin{abstract}

We define a programming language independent controller $\ASM{TaCtl}$
for multi-level transactions and an operator $TA$, which when applied
to concurrent programs with multi-level shared locations containing
hierarchically structured complex values, turns their behavior with
respect to some abstract termination criterion into a transactional
behavior. We prove the correctness property that concurrent runs under
the transaction controller are serialisable, assuming an \emph{Inverse
  Operation Postulate} to guarantee recoverability. For its
applicability to a wide range of programs we specify the transaction
controller $\ASM{TaCtl}$ and the operator $TA$ in terms of Abstract
State Machines (ASMs). This allows us to model concurrent updates at
different levels of nested locations in a precise yet simple manner,
namely in terms of partial ASM updates. It also provides the
possibility to use the controller $\ASM{TaCtl}$ and the operator $TA$
as a plug-in when specifying concurrent system components in terms of
sequential ASMs.

\end{abstract}

\section{Introduction}

This paper is about the use of generalized multi-level transactions as a means to
control the consistency of concurrent access of programs to shared
locations, which may contain hierarchically structured complex values,
and to avoid that values stored at these locations are changed almost
randomly. According to Beeri, Bernstein and Goodman \cite{BeBeGo89} most real systems with shared data have multiple levels, where each level has its own view of the data and its own set of operations, such that operations on one level may be conflict-free, while they require conflicting lower-level operations.

A multi-level {\em transaction controller} interacts with
concurrently running programs (i.e., sequential components of an
asynchronous system) to control whether access to a possibly
structured shared location can be granted or not, thus ensuring a
certain form of consistency for these locations. This includes in
particular the resolution of low-level conflicts by higher-level
updates as provided by multi-level
transactions~\cite{BeBeGo89,Weikum86,Weikum91} in distributed
databases~\cite{BerGoo81,OezVal94}. A commonly accepted consistency
criterion is that the joint behavior of all transactions (i.e.,
programs running under transactional control) with respect to the
shared locations is equivalent to a serial execution of those
programs. Serialisability guarantees that each transaction can be
specified independently from the transaction controller, as if it had
exclusive access to the shared locations.

It is expensive and cumbersome to specify transactional behavior and
prove its correctness again and again for components of the great
number of concurrent systems. Our goal is to define once and for all
an abstract (i.e. programming language independent) transaction
controller $\ASM{TaCtl}$ which can simply be ``plugged in'' to turn
the behavior of concurrent programs (i.e., components~$M$ of any given
asynchronous system $\cal M$) into a transactional one.  This involves
to also define an operator~$TA(\bullet,\ASM{TaCtl})$ that transforms a
program $M$ into a new one $TA(M,\ASM{TaCtl})$, by means of which the
programs~$M$ are forced to listen to the controller $\ASM{TaCtl}$ when
trying to access shared locations. To guarantee recoverability where
needed we use an \emph{Inverse Operation Postulate}
(Sect.\ref{sect:recovery}) for component machines $M$; its satisfaction
is a usage condition for submitting $M$ to the transaction controller.

For the sake of generality we define the operator and the controller
in terms of Abstract State Machines (ASMs), which can be read and
understood as pseudo-code so that $\ASM{TaCtl}$ and the operator $TA$
can be applied to code written in any programming language (to be
precise: whose programs come with a notion of single step, the level
where our controller imposes shared memory access constraints to
guarantee transactional code behavior). The precise semantics
underlying the pseudo-code interpretation of ASMs (for which we refer
the reader to~\cite{BoeSta03}) allows us to mathematically prove the
correctness of our controller and operator. 

Furthermore, we generalize the strictly hierarchical view of multiple
levels by using the partial update concept for ASMs developed
in~\cite{gurevich:jucs2001} and further investigated
in~\cite{gurevich:tcs2005} and~\cite{SchWan10}. This abstraction by partial
updates simplifies the transaction model, as it allows us to model
databases with complex values and to provide an easy-to-explain, yet
still precise model of \emph{multi-level} transactions, where
dependencies of updates of complex database values are dealt with in
terms of compatibility of appropriate value changing operators (see
also~\cite{KiScZh09}). In fact, technically speaking the model we
define here is an ASM refinement (in the sense of~\cite{Boerger02b})
of some of the components of the model published in~\cite{BoeSch14},
namely by a) generalizing the flat transaction model to multi-level
transactions which increase the concurrency in transactions and b)
including an $\ASM{Abort}$ mechanism. Accordingly, the serializability
proof is a refinement of the proof in~\cite{BoeSch14}, as the refined
model is a conservative extension of the model for flat
transactions.\footnote{For a detailed illustration of combined model
  and proof refinement we refer the reader to the Java compiler
  correctness verification in~\cite{BatBoe08}.}

We concentrate on transaction controllers that employ locking
strategies such as the common two-phase locking protocol
(2PL)~\cite{Papadimitriou86}. That is, each transaction first has to
acquire a (read- or write- or more generally operator-) lock for a
shared, possibly nested location, whereby the access to the location to
perform the requested operations is granted. Locks are released after
the transaction has successfully committed and no more access to the
shared locations is necessary. 

There are of course other approaches to transaction handling, see
e.g. \cite{elmasri:2006,gray:1993,KiScZh09,ScRiDr00} and the extensive
literature there covering classical transaction control for flat
transactions, timestamp-based, optimistic and hybrid transaction
control protocols, as well as other non-flat transaction models such
as sagas. To model each of these approaches would fill a book; our
more modest goal here is to concentrate on one typical approach to
illustrate with rigour and in full detail a method by which such
transaction handling techniques can be specified and proved to be
correct. For the same reason we do not consider fairness issues,
though they are important for concurrent runs.

In Section~\ref{sect:MLTransactions} we first give a more detailed
description of the key ideas of multi-level transactions and their
relationship to partial updates.  We define $\ASM{TaCtl}$ and the
operator $TA$ in Section~\ref{sect:TAoperator} and the $\ASM{TaCtl}$
components in Section~\ref{sect:TaCtl}. In Section~\ref{sect:Thm} we
prove the correctness of these definitions. 

We assume the reader to have some basic knowledge of ASMs, covering
the definitions---provided 20 years ago in~\cite{Gurevich94b} and
appearing in textbook form in~\cite[Sect.2.2/4]{BoeSta03}---for what
are ASMs (i.e. their transition rules) and how their execution in
given environments performs state changes by applying sets of updates
to locations. Nevertheless at places where some technical details
about ASMs need to be refered to we briefly describe their notation
and their meaning so that the paper can be understood also by a more
general audience of readers who view ASMs as a semantically
well-founded form of pseudo-code that performs computations over
arbitrary structures.

\section{Multi-Level Transactions and Partial Updates}\label{sect:MLTransactions}

While standard flat transaction models start from a view of operation
sequences at one level, where each operation reads or writes a shared
location---in less abstract terms these are usually records or pages
in secondary storage---the multi-level transaction
model~\cite{BeBeGo89,Weikum86,Weikum91} relaxes this view in various
ways. The key idea is that there are multiple levels, each with its
own view of the data and its own set of operations.

The operations on a higher level may be compatible with one another,
whereas operations on a lower level implementing them are not. As a
motivating example pages in secondary storage and records stored in
these pages can be considered. Updating two different records in the
same page should be compatible, but not simultaneous writing of the whole
page. When updating a particular record, this record should be
locked for writing; as writing the record requires also writing the
page, the page should also be locked. However, the page lock could
immediately be released after writing, as it is sufficient to block
updates to the record until commit. So another transaction could get
access to a different record on the same page with a long lasting lock
on the record and another {\em temporary} lock on the page.

A second key idea of the multi-level transaction model stressed
in~\cite{ScRiDr00,Weikum86,Weikum91} is that some high-level
operations may even be compatible when applied to the same shared
location. Standard examples are addition, subtraction or insertion of
values into a set. For instance, if a field in a record is to be
updated by adding 3 to the stored value, then another operation
subtracting 2 could be executed as well without causing
inconsistencies. Consequently, the strictness of a lock can be
relaxed, as a plus-lock can co-exist with another plus-lock, but must
prevent an arbitrary update or a times-lock (for multiplication).

We will demonstrate in the following sections that these key ideas of
the multi-level transaction model can be easily and precisely captured
by refinement of the ASM-based transaction handler in \cite{BoeSch14}.
Since to execute a step a component ASM $M$ computes a set of updates
(on which the transaction controller $\ASM{TaCtl}$ can speculate for
lock handling etc.), it suffices to incorporate partial updates (as
handled in \cite{SchWan10}) into the model developed in
\cite{BoeSch14}. For the first idea of the multi-level transaction
model we exploit the {\em subsumption} relation between locations
defined in \cite{SchWan10}: a location $l$ subsumes a location
$l^\prime$ iff in all states $S$ the value of $l$, i.e. $eval(l,S)$,
uniquely determines the value of $l^\prime$,
i.e. $eval(l^\prime,S)$. For instance, a value of a page uniquely
determines the values of the records in it, but also a tree value
determines the values of subtrees and leaves. The notion of
subsumption offers a simple realization of the concept of temporary
locks: temporary locks are needed on all subsuming locations.

The second idea of compatible operations can be captured by
introducing particular operation-dependent locks, which fine-tune the
exclusive write locks. Some of these operation-locks may be compatible
with each other, such that different transactions may execute
simultaneously operations on the same location. Naturally, this is
only possible with partial updates defined by an operator $op$ and an
argument $v$. The new value stored at location $l$ is obtained by
evaluating $op(eval(l, S),v)$. If operators are compatible in the
sense that the final result is independent from the order in which
the operators are applied, then several such partial updates can be
executed at the same time.

Thus, the refinement of the concurrent ASM in \cite{BoeSch14} for
handling flat transactions affects several aspects:

\begin{itemize}

\item Each component machine $TA(M,\ASM{TaCtl})$ resulting from the transaction
  operator will have to ask for more specific operation-locks and to
  execute partial updates together with other machines.

\item Each component machine $TA(M,\ASM{TaCtl})$ will also have to
  release temporary locks at the end of each step.

\item In case already the partial updates of $M$ itself are
  incompatible, i.e. are such that they cannot be merged into a single
  genuine update, the machine $TA(M,\ASM{TaCtl})$ should not fire at
  all; instead, it must be completely \ASM{Abort}ed, i.e., all its
  steps will have to be undone immediately.

\item The \ASM{LockHandler} component requires a more sophisticated condition for granting locks, which takes subsumption into account.

\item The \ASM{Recovery} component will have to be extended to capture \ASM{Undo}ing also partial updates, for which inverse operations are required.

\item The \ASM{DeadlockHandler} and \ASM{Commit} components remain unaffected.

\end{itemize}

While these refinements with partial updates to capture multi-level
transactions require only a few changes---which also extend easily to
the serializability proof---they also highlight some not so obvious
deficiencies in the model of multi-level transactions itself. In
\cite{BeBeGo89} it is claimed that each higher-level operation is
implemented by lower-level ones. For instance, an update of a record
requires reading and writing a page. This is also true for
object-oriented or complex value systems. For instance, in
\cite{ScRiDr00} it is anticipated that there could be levels for
objects, records and pages, such that an operation on an object would
require several update operations on records storing parts of the
object. However, in the light of partial updates it is the object that
subsumes the record. This implies that the definition of
level-specific conflict relations \cite{Weikum86,ScRiDr00} with the
condition that a high-level conflict must be reflected in a low-level
one, but not vice versa, is too specific. It is true for fields,
records and pages, but cannot be applied any more, when the
higher-level locations subsume the lower-level ones. On the other
hand, using subsumption for the definition of levels does not work
either, as objects and pages are conceptually different and should not
be considered as residing on the same level. To this end the use of
subsumption between locations makes the idea behind multi-level
transactions much clearer and formally consistent. In particular, the
notion of level itself becomes irrelevant in this setting, so in a
sense the transaction model formalised in this article can be seen as
a moderate generalisation of the multi-level transaction model.

A second strengthening and generalisation of the concept of
multi-level transactions realized in our model comes from the
observation that in order to undo a partial update inverse operations
are not just nice to have, but must exist, because otherwise
recoverability cannot be guaranteed. This also shows that a
transaction model cannot be treated in isolation without taking
recovery into account at the same time.

\section{The Transaction Controller and Operator}
\label{sect:TAoperator}

As explained above, a transaction controller performs the lock
handling, the deadlock detection and handling, the recovery mechanism
(for partial recovery) d the commit or abortion of single
machines---we use Astract State Machines to describe programs. Thus we
define $\ASM{TaCtl}$ as consisting of five components specified in
Sect.~\ref{sect:TaCtl}. We use $\ASM{SmallCaps}$ for rules and
\emph{italics} for functions, sets, predicates, relations.

\begin{asm}
\ASM{TaCtl}=\+
\ASM{LockHandler} \\
\ASM{DeadlockHandler} \\
\ASM{Recovery} \\
\ASM{Commit}\\
\ASM{Abort} 
\end{asm}

\subsection{The Transaction Operator $TA(\bullet,\ASM{TaCtl}$)}

The operator~$TA(\bullet,\ASM{TaCtl})$ transforms the component
machines~$M$ of any concurrent system (in particular an asynchronous,
concurrent ASM \cite{BoeSch15}) ${\cal M} = (M_i)_{i \in I}$ into
components of a concurrent system $TA({\cal M},\ASM{TaCtl})$, where
each component $TA(M_i,\ASM{TaCtl})$ runs as transaction under the
control of~$\ASM{TaCtl}$. Thus $TA({\cal M},\ASM{TaCtl})$ is defined
as follows:\footnote{For notational economy we use the same letters
  $TA$ once to denote an operator applied to a set of component
  machines and $\ASM{TaCtl}$, once to denote an operator applied to
  single component machines and $\ASM{TaCtl}$. From the context it is
  always clear which $TA$ we are talking about.}
\begin{asm}
TA({\cal M},\ASM{TaCtl})= 
                 (TA(M_i,\ASM{TaCtl})_{i \in I},\ASM{TaCtl})
\end{asm}

It remains to expalin the definition of $TA(M,\ASM{TaCtl})$
below. $\ASM{TaCtl}$ keeps a dynamic set $TransAct$ of those
machines~$M$, whose runs it currently has to supervise. This is to
guarantee that $M$ operates in a transactional manner, until it has
$Terminated$ its transactional behavior (so that it can $\ASM{Commit}$
it).\footnote{In this paper we deliberately keep the termination
  criterion abstract so that it can be refined in different ways for
  different transaction instances.}  To turn the behavior of a
machine~$M$ into a transactional one, first of all~$M$ has to register
itself with the controller $\ASM{TaCtl}$, i.e., to be inserted into
the set of $TransAct$ions currently to be handled.  $\ASM{Undo}$ing
some steps~$M$ made in the given transactional run as part of a
recovery, a last-in first-out queue $history(M)$ is needed, which for
each step of~$M$ keeps track of the newly requested locks and of the
recovery updates needed to $\ASM{Restore}$ the values of the
locations~$M$ changed in this step. When~$M$ enters the set
$TransAct$, the $history(M)$ has to be initialized (to the empty
queue).

The crucial transactional feature is that each non-private
(i.e. shared or monitored or
output)\footnote{See~\cite[Ch.2.2.3]{BoeSta03} for the classification
  of locations and functions.} location~$l$ a machine~$M$ needs to
read or write for performing a step has to be $LockedBy(M)$ for this
purpose;~$M$ tries to obtain such locks by calling the
$\ASM{LockHandler}$. In case no $newLocks$ are needed by~$M$ in its
$currState$ or the $\ASM{LockHandler}$ $GrantedLocksTo(M)$, $M$
\emph{canGo} to try to perform its next step: if it cannot fire (due
to an inconsistency of the set $aggregatedUpdSet$\footnote{We borrow
  the name from CoreASM~\cite{CoreASM}.} of updates computed by $M$
from the assignment and the partial update instructions of~$M$, see
below) it calls the $\ASM{Abort}$ component.  If $CanFire(M)$ holds,
we require $\ASM{Fire}(M)$ to perform the $M$-step together with one
step of all $Partner(M)$-machines, i.e. of machines~$N$ that
$CanFire(N)$ simultaneously with~$M$ and share some locations to be
updated with~$M$ (possibly via some compatible update operations on
those locations, see below).\footnote{This view of concurrency is an
  instance of the general definition of concurrent ASMs provided
  in~\cite{BoeSch15}.} This means to $\ASM{Aggregate}$ the (below
called genuine) updates~$M$ yields in its $currState(M)$ together with
the partial updates of~$M$ together with the genuine updates and
partial updates of all $Partner(M)$-machines. In addition a
$\ASM{RecoveryRecord}$ component has to $\ASM{Record}$ for each of
these machines~$N$ in its $history$ the obtained $newLocks$ together
with the $recoveryUpd$ates needed should it become necessary to
$\ASM{Undo}$ the updates contributed by~$N$ to
this~$\ASM{Aggregate}$-step. Then~$M$ continues its transactional
behavior until it is $Terminated$. In case the $\ASM{LockHandler}$
$RefusedLocksTo(M)$, namely because another machine~$N$ in $TransAct$
has some of these locks, $M$ has to $Wait$ for~$N$; in fact it
continues its transactional behavior by calling again the
$\ASM{LockHandler}$ for the needed $newLocks$---until the needed
locked locations are unlocked, when~$N$'s transactional behavior is
$\ASM{Commit}$ed, whereafter a new request for these locks
$GrantedLocksTo(M)$ may become true.\footnote{A refinement (in fact a desirable
  optimization) consists in replacing such a waiting cycle by
  suspending~$M$ until the needed locks are released. Such a
  refinement can be obtained in various ways, a simple one consisting
  in letting~$M$ simply stay in $waitForLocks$ until the $newLocks$
  $CanBeGranted$ and refining $\ASM{LockHandler}$ to only choose pairs
  $(M,L)\in LockRequest$ where it can $\ASM{GrantRequestedLocks}(M,L)$
  and doing nothing otherwise (i.e. defining
  $\ASM{RefuseRequestedLocks}(M,L)=~\SKIP$). See
  Sect.~\ref{sect:TaCtl}.}

As a consequence deadlocks may occur, namely when a cycle occurs in
the transitive closure $Wait^*$ of the $Wait$ relation.  To resolve
such deadlocks the $\ASM{DeadlockHandler}$ component of $\ASM{TaCtl}$
chooses some machines as $Victim$s for a recovery.\footnote{To
  simplify the serializability proof in Sect.\ref{sect:TaCtl} and
  without loss of generality we define a reaction of machines~$M$ to
  their victimization only when they are in $ctl\_state(M)=~$TA-$ctl$
  (not in $ctl\_state(M)=waitForLocks$). This is to guarantee that no
  locks are $Granted$ to a machine as long as it does
  $waitForRecovery$.} After a victimized machine~$M$ is $Recovered$ by
the $\ASM {Recovery}$ component of $\ASM{TaCtl}$ it can exit
its $waitForRecovery$ mode and continue its transactional behavior.
 
This explains the following definition of $TA(M,\ASM{TaCtl})$ as a
control state ASM, i.e. an ASM with a top level Finite State Machine
control structure. We formulate it by the flowchart diagram of
Fig.~\ref{fig:TA(M,C)}, which has a precise control state ASM
semantics (see the definition in~\cite[Ch.2.2.6]{BoeSta03}).\footnote{The
components for the recovery feature are highlighted in the flowchart
by a different colouring.} The
macros which appear in Fig.~\ref{fig:TA(M,C)} are defined in the rest
of this section.

\begin{figure}[htb]
  \begin{center}
  \includegraphics[width=0.9\textwidth]{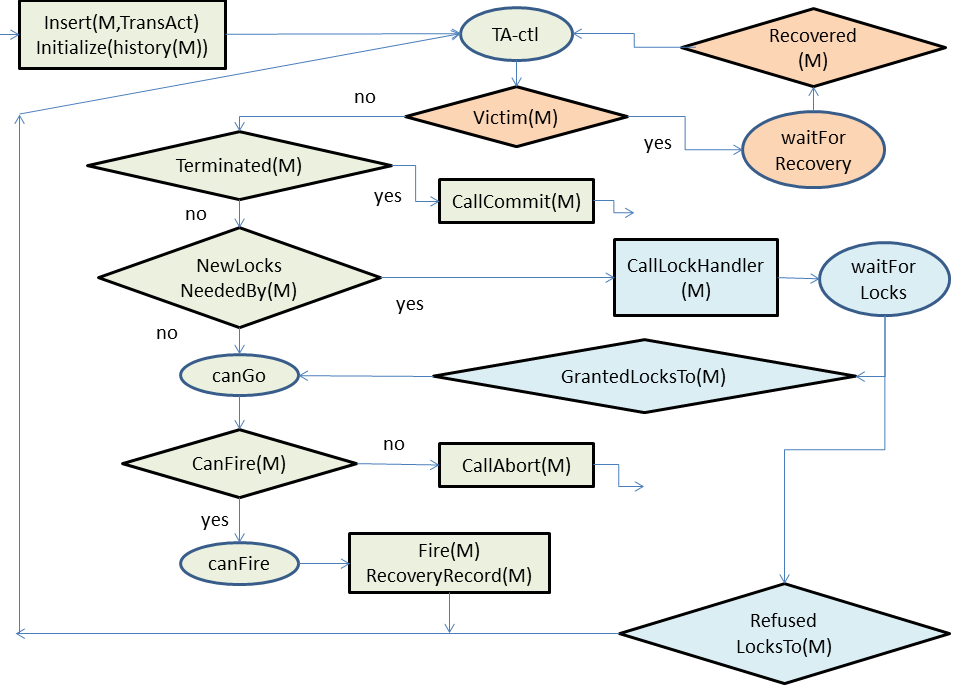}
  \end{center}
  \caption{TA(M,\ASM{TaCtl})}
  \label{fig:TA(M,C)}
\end{figure}

\subsection{The  $TA(M,\ASM{TaCtl})$ Macros}

The predicate $NewLocksNeededBy(M)$ holds, if in the current state
of~$M$ at least one of two cases happens: either $M$ reads some shared
or monitored location, which is not yet $LockedBy(M)$ for reading or
writing, or~$M$ writes some shared or output location which is not yet
$LockedBy(M)$ for the requested write operation. We compute the set of
such needed, but not yet $Locked$ locations by a function $newLocks$
(whose arguments we omit for layout reasons in Fig.\ref{fig:TA(M,C)}).

\begin{asm}
NewLocksNeededBy(M)=\+
   newLocks(M,currState(M)) \not = \emptyset 
\end{asm}

Whether a lock for a location can be granted to a machine depends on
the kind of read or write operation the machine wants to perform on
the location. 

\subsubsection{Updates and partial updates.} 

In basic ASMs a write operation is denoted by \emph{assignment
  instructions} of form $s:=t$ 
resulting for $s=f(t_1,\ldots,t_n)$ in any given state~$S$ in an
update of the location $l=(f,(eval(t_1,S),\ldots,eval(t_n,S))$ by the
value $eval(t,S)$~\cite[pg.29]{BoeSta03}. Here $eval(t,S)$ denotes the
evaluation of~$t$ in state~$S$ (under a given interpretation~$I$ of
free variables). We call such updates $(l,val)$ \emph{genuine}
(in~\cite{SchWan10} they are called exclusive) to distinguish them
from partial updates. The reader who is not knowledgeable about ASMs
may interpret locations $(f,args)$ correctly as array variables with
variable name~$f$ and index $args$.

Analogously, we denote write operations that involve partial updates via an
operation $op$ by \emph{update instructions} of form
\[s:=_{op}t\]
which require an overall update of the location
$(f,(eval(t_1,S),\ldots,eval(t_n,S))$ by a value to which $s:=_{op}t$
contributes by the value $op(eval(s,S),eval(t,S))$. A typical example
of such operations is parallel counting (used already
in~\cite{BoDuRo94}) where say seven occurences of a partial update instruction
\[x:=_{parCount}x+1\]
in a state~$S$ are aggregated into a genuine update
$(x,eval(x+7,S))$. Other examples are tree manipulation by
simultaneous updates of independent subtrees or more generally term
rewriting by simultaneous independent subterm updates, etc.,
see~\cite{ScRiDr00,SchWan10}. $\ASM{Aggregate}$ (which is implemented
as a component in CoreASM~\cite{CoreASM}) specifies how to compute and
perform the desired overall update effect, i.e. the genuine update set
yielded by the set of all genuine and the multiset of all partial
updates involving any location~$l$ and all other higher or lower level
location updates the new value of~$l$ may depend upon due to an update
to be performed at that level by some machine in the considered step.

Therefore, a location can be $LockedBy(M)$ for reading
($Locked(l,M,Read)$) or for writing ($Locked(l,M,Write)$) via a genuine
update or for a partial update using operation~$op$
($Locked(l,M,op)$). We also use $Locked(l,M,temp)$ for a temporary
lock of a location $l$. Same as a genuine write-lock such a temporary
lock blocks location $l$ for exclusive use by $M$. However, such
temporary locks will be immediately released at the end of a single
step of $M$. As explained in Section \ref{sect:MLTransactions} the
purpose of such temporary locks is to ensure that an implied write
operation on a subsuming location (i.e., a partial update) can be
executed, but the lock is not required any more after completion of
the step, as other non-conflicting partial updates should not be
prohibited.

Even if $Locked(l,M,op)$ temporarily (case $op=temp$) or for a partial
update operation (case $op \neq Read,Write$) machine~$M$ still needs a
lock to be allowed to Read~$l$ because for a partial update location a
different machine could acquire another compatible operation lock
on~$l$ that is not controllable by~$M$ alone. For this reason partial
update operations are defined below to be incompatible with Read and
genuine Write.

\begin{asm}
newLocks(M,currState(M))\footnote{By the second argument 
  $currState(M)$ of $newLocks$ we indicate 
  that this function of~$M$ is a dynamic function which is evaluated in each 
  state of~$M$, namely by computing in this state the sets $ReadLoc(M)$ and 
  $WriteLoc(M)$; see Sect.~\ref{sect:Thm} for the detailed definition.}
  =\+
     \{(l,Read) \mid l \in R\mbox{-}Loc  \AND \+
         \NOT Locked(l,M,Read) \AND ~\NOT Locked(l,M,Write) \}\-
     \cup ~\{ (l,o) \mid l \in W\mbox{-}Loc 
          \AND o \in \{ Write \} \cup Opn \AND ~\NOT Locked(l,M,o) \} \\
    \cup ~\{ (l,temp) \mid \exists l^\prime \in W\mbox{-}Loc \;  l \neq l^\prime
          \AND l \;\text{subsumes}\; l^\prime \} \-
\WHERE \\
R\mbox{-}Loc = 
            ReadLoc(M,currState(M)) \cap (SharedLoc(M) \cup MonitoredLoc(M))\\
W\mbox{-}Loc = 
           WriteLoc(M,currState(M)) \cap (SharedLoc(M) \cup OutputLoc(M))\-
LockedBy(M)=  \+
   \{l \mid Locked(l,M,Read) \OR Locked(l,M,Write) \OR Locked(l,M,temp) \OR \+
                 Locked(l,M,op) \FORSOME op \in Opn \}
\end{asm}

To $\ASM{CallLockHandler}$ for the $newLocks$ requested by~$M$ in its
$currState(M)$ means to $\ASM{Insert}(M)$ into the
$\ASM{LockHandler}$'s set of to be handled $LockRequest$s. Similarly
we let $\ASM{CallCommit(M)}$ resp.  $\ASM{CallAbort(M)}$ stand for
insertion of~$M$ into a set $CommitRequest$ resp. $AbortRequest$ of
the $\ASM{Commit}$ resp. $\ASM{Abort}$ component.

\begin{asm}
\ASM{CallLockHandler}(M)=~\ASM{Insert}(M,LockRequest)\\
\ASM{CallCommit}(M)=~\ASM{Insert}(M,CommitRequest)\\
\ASM{CallAbort}(M)=~\ASM{Insert}(M,AbortRequest)
\end{asm}

Once a machine $canGo$ because it has acquired all needed locks for
its next proper step, it must be checked whether the
$aggregatedUpdSet(M,currState(M))$ it yields in its current state is
consistent so that $CanFire(M)$: if this is not the case, $M$ is
$\ASM{Abort}$ed whereby it interrupts its transactional
behavior. 

\begin{asm}
CanFire(M)=  Consistent(aggregatedUpdSet(M,currState(M))).
\end{asm}

Here $aggregatedUpdSet(M,S)$ is defined as the set of updates~$M$
yields\footnote{See the definition in~\cite[Table 2.2
    pg.74]{BoeSta03}.}  in state~$S$, once the resulting genuine
updates have been computed for all partial updates to be performed
by~$M$ in~$S$.\footnote{In CoreASM~\cite{CoreASM} this computation is
  done by corresponding plug-ins.} If this update set is consistent,
to $\ASM{Fire}(M)$ $\ASM{Aggregate}$ performs not only the (genuine
and partial) updates of~$M$, but also those of any other
$Partner(M)$-machine~$N$ which shares some to-be-updated location
with~$M$ and $CanFire(N)$ simultaneously with~$M$.

\begin{asm}
\ASM{Fire}(M)=\+
    \FORALL N \in Partner(M) \DO  N \\
    \ASM{ReleaseTempLocks}(M) \-
\WHERE \+
Partner(M)=~
   \{ N \mid ShareUpdLocs(M,N) \AND mode(N)=canFire \} \\
\ASM{ReleaseTempLocks}(M) = \+
     \FORALL l ~~ Locked (l,M,temp) := false
\end{asm}

The constraints defined in the next section for $GrantedLocksTo(M)$ and the
consistency condition for $aggregatedUpdSet$s guarantee that 
$\ASM{Fire}(M)$ computes and performs a consistent update set.

\subsubsection{Remark on notation.} 

As usual with programming languages, for
ASMs we consider (names for) functions, rules, locations, etc., as
elements of the universe for which sets (like $ReadLoc$, $WriteLocs$)
and relations (like subsumption) can be mathematically defined and
used in rules. In accordance with usual linguistic reflection notation
we also quantify over such elements, e.g. in $\FORALL N \in SetOfAsm
\DO N$, meaning that $\DO N$ stands for an execution of (a step of)
the ASM denoted by the logical variable $N$.

The $\ASM{RecoveryRecord}(M)$ component has to $\ASM{Record}$ for each
$Partner(M)$-machine its $recoveryUpd$ates (defined below where we
need the details for the $\ASM{Recovery}$ machine) and the obtained
$newLocks$.

\begin{asm}
\ASM{RecoveryRecord}(M)=~\FORALL N \in Partner(M)\+
   \ASM{Record}(recoveryUpd(N,currState(N)),\+
           newLocks(N,currState(N)),N)\dec\-
\ASM{Record}(recUpdSet,lockSet,N) = \+
             \ASM{Append}((recUpdSet,lockSet),history(N))
\end{asm}

\subsubsection{Remark on nondeterminism.} 

The ASM framework provides two ways to
deal with nondeterminism. `True' nondeterminism can be expressed using
the $\CHOOSE$ construct to define machines of form

\begin{asm}
M =  \CHOOSE x \WITH \alpha \IN r(x)
\end{asm}

\noindent where $r$ has to be an ASM rule. Nondeterminism can also be
modeled `from outside' by using choice functions, say $select$, in
machines of form
\begin{asm}
N =  r(select(\alpha))
\end{asm}

\noindent where in the view of the transition rules everything is
deterministic once a definition of the choice function is
given. Using one or the other form of nondeterminism influences the
underlying logic for ASMs (see \cite[Ch.8.1]{BoeSta03}).

The locks acquired for a machine $M$ as above depend on the chosen value
$a$ for $x$ so that when $M$ performs its next step it must have the
same value $a$ for $x$ to execute $r(x)$. To `synchronize' this
choice of $a$ for $x$ for lock acquisition and rule execution we
assume here nondeterminism in component machines $M \in TransAct$ to be
expressed by choice functions.

\section{The Transaction Controller Components}
\label{sect:TaCtl}

\subsection{The $\ASM{Commit}$ Component}

A $\ASM{CallCommit(M)}$ by machine~$M$ enables the $\ASM{Commit}$
component, which handles one at a time $CommitRequest$s. For this we
use the $\CHOOSE$ operator, so we can leave the order in which the
$CommitRequest$s are handled refinable by different instantiations of
$\ASM{TaCtl}$.

$\ASM{Commit}$ing~$M$ means to $\ASM{Unlock}$ all locations~$l$ that
are $LockedBy(M)$.\footnote{$\ASM{Unlock}$ is called only in states
  where~$M$ has no $temp$orary lock.} Note that each lock obtained
by~$M$ remains with~$M$ until the end of~$M$'s transactional
behavior. Since~$M$ performs a $\ASM{CallCommit(M)}$ when it has
$Terminated$ its transactional computation, nothing more has to be
done to $\ASM{Commit}$ $M$ besides deleting~$M$ from the sets of
$CommitRequest$s and still to be handled $TransAct$ions.\footnote{We
  omit clearing the $history(M)$ queue since it is initialized
  when~$M$ is inserted into $TransAct(\ASM{TaCtl})$.}

\begin{asm}
\ASM{Commit} =\+
  \IF CommitRequest \not = \emptyset \THEN \+
     \CHOOSE M \in CommitRequest ~\ASM{Commit}(M) \-
  \WHERE \+
    \ASM{Commit}(M) = \+
       \FORALL l \in LockedBy(M) ~~ \ASM{Unlock}(l,M) \\
       \ASM{Delete}(M,CommitRequest)\\
       \ASM{Delete}(M,TransAct)\-
    \ASM{Unlock}(l,M)= ~\FORALL o \in \{Read,Write\} \cup Opn \+
                        Locked(l,M,o):=false 
\end{asm}

The locations $(Locked,(l,M,o))$ are shared by the $\ASM{Commit}$,
$\ASM{LockHandler}$ and $\ASM{Recovery}$ components, but these
components never have the same~$M$ simultaneously in their request or
$Victim$ set, respectively: When ~$M$ has performed a
$\ASM{CallCommit(M)}$, it has $Terminated$ its transactional
computation and does not participate any more in any $LockRequest$ or
$Victim$ization. Furthermore, by definition no~$M$ can at the same
time issue a $LockRequest$ (possibly triggering the
$\ASM{LockHandler}$ component) and be a $Victim$ (possibly triggering
the $\ASM{Recovery}$ component).

\subsection{The $\ASM{LockHandler}$ Component}

As for $\ASM{Commit}$ we use the $\CHOOSE$ operator also for the
$\ASM{LockHandler}$ to leave the order in which the $LockRequest$s are
handled refinable by different instantiations of $\ASM{TaCtl}$.

The strategy we adopted in~\cite{BoeSch14} for lock handling with only
genuine updates was to refuse all locks for locations requested by~$M$,
if at least one of the following two cases occurs:

\begin{itemize}
\item some of the requested locations is $Locked$ by another
  transactional machine~$N \in TransAct$ for writing,
\item some of the requested locations is a $WriteLoc$ation in $W\mbox{-}Loc$ that is
  $Locked$ by another transactional machine~$N \in TransAct$ for reading.
\end{itemize}

In other words, read operations of different machines are compatible
and upgrades from read to write locks are possible. In the presence of
partial updates, which have to be simultaneously performed by one or
more transactional machines this compatibility relation has to be
extended to partial operations, but guaranteeing the consistency of
the result of the $\ASM{Aggregate}$ mechanism which binds together
shared updates to a same location. We adopt the following constraints
defined in~\cite{SchWan10}:

\begin{itemize}
\item A genuine Write is incompatible with a Read or genuine Write or
  any partial operation $op \in Opn$ of any other machine.
\item A Read is incompatible with any Write (whether genuine or
  involving a partial operation $op \in Opn$).
\item Two partial operations $op,op' \in Opn$ are incompatible on a
  location~$l$ if in some state applying to update~$l$ first $op$ then
  $op'$ yields a different result from first applying $op'$ then $op$.
\end{itemize} 

However, to guarantee the serializability of transactions in the
presence of partial updates of complex data structures consistency is
needed also in case one update concerns a substructure of another
update. Therefore we stipulate that a lock request for~$l$
$CannotBeGranted$ to a machine~$M$ as long as a location~$l'$ which
subsumes~$l$ is $Locked$ by another machine~$N$. The subsumption definition
is taken from~\cite[Def.2.1]{SchWan10}:

\begin{asm}
l' \mbox{ subsumes } l = ~  \FORALL S ~ 
   eval(l',S) \mbox{ uniquely determines } eval(l,S)
\end{asm}

To $\ASM{RefuseRequestedLocks}$ it suffices to set the communication
interface $RefusedLocksTo(M)$ of $TA(M,\ASM{TaCtl})$; this makes~$M$ $Wait$ for
each location~$l$ and operation~$o$ for which the lock
$CannotBeGranted$ to~$M$.

\begin{asm}
\ASM{LockHandler} =\+
\IF LockRequest \not = \emptyset \THEN \+
   \CHOOSE M \in LockRequest ~  \ASM{HandleLockRequest}(M)\-
\WHERE \+
 \ASM{HandleLockRequest}(M) = \+
    \LET L = newLocks(M,currState(M))\+
      \IF CannotBeGranted(M,L) \+
         \THEN  ~ \ASM{RefuseRequestedLocks}(M,L)\\
         \ELSE  ~ \ASM{GrantRequestedLocks}(M,L) \-
      \ASM{Delete}((M,L),LockRequest)\dec\-         
 CannotBeGranted(M,L)=\+
    \FORSOME (l,o) \in L ~ CannotBeGranted(l,M,o)\-
 CannotBeGranted(l,M,o)=\+
       \FORSOME N \in TransAct \setminus \{M\} ~~ \mbox{ Blocks } (N,l,o)\- 
 \mbox{ Blocks } (N,l,o)=~ \FORSOME o'\+
   Locked(l,N,o') \AND ~ \NOT Compatible(o,o',l) \+
      \OR ~ \FORSOME l'~ Locked(l',N,o') \AND l' \mbox{ subsumes } l \dec\-
 \ASM{RefuseRequestedLocks}(M,L) =  (RefusedocksTo(M):=true)\\
 \ASM{GrantRequestedLocks}(M,L)=\+
    \FORALL (l,o) \in L ~~ Locked(l,M,o):=true\\ 
    GrantedLocksTo(M):=true   
\end{asm}

\subsection{The $\ASM{DeadlockHandler}$ Component}

A $Deadlock$ originates if two machines are in a $Wait$ cycle, i.e.,
they wait for each other. In other words, a deadlock occurs, when for
some (not yet $Victim$ized) machine~$M$ the pair $(M,M)$ is in the
transitive (not reflexive) closure $Wait^*$ of $Wait$. In this case
the $\ASM{DeadlockHandler}$ selects for recovery a (typically minimal)
subset of $Deadlocked$ transactions $toResolve$---they are
$Victim$ized to $waitForRecovery$, in which mode (control state) they
are backtracked until they become $Recovered$. The selection criteria
are intrinsically specific for particular transaction controllers,
driving a usually rather complex selection algorithm in terms of
number of conflict partners, priorities, waiting time, etc.  In this
paper we leave their specification for $\ASM{TaCtl}$ abstract (read:
refinable in different directions) by using the $\CHOOSE$ operator.

\begin{asm}
\ASM{DeadlockHandler} =\+
\IF Deadlocked \cap \overline{Victim} \not = \emptyset \THEN 
                  \mbox{ // there is a Wait cycle}\+
    \CHOOSE toResolve \subseteq Deadlocked \cap \overline{Victim} \+  
      \FORALL M \in toResolve~ Victim(M):=true\dec\-  
\WHERE \+
Deadlocked = \{M \mid (M,M) \in Wait^*\}\\
Wait^*= \mbox{ TransitiveClosure}(Wait)\\
Wait(M,N) = ~ \FORSOME (l,o) ~ Wait(M,(l,o),N) \\
Wait(M,(l,o),N) = \+
     (l,o) \in newLocks(M,currState(M)) \AND   N \in TransAct \setminus \{M\} \+ 
            \AND   ~ \mbox{ Blocks } (N,l,o)
\end{asm}

\subsection{The $\ASM{Recovery}$ Component}
\label{sect:recovery}

Also for the $\ASM{Recovery}$ component we use the $\CHOOSE$ operator
to leave the order in which the $Victim$s are chosen for recovery
refinable by different instantiations of $\ASM{TaCtl}$. In order to be
$Recovered$ a machine~$M$ is backtracked by $\ASM{Undo}(M)$ steps
until $M$ is not $Deadlocked$ any more, in which case it is
deleted from the set of $Victim$s, so that by definition it is
$Recovered$. This happens at the latest when $history(M)$ has become
empty.

\begin{asm}
\ASM{Recovery} =\+
   \IF Victim \not = \emptyset \THEN \+
      \CHOOSE M \in Victim ~\ASM{TryToRecover}(M)\dec \-          
\WHERE \+
\ASM{TryToRecover}(M) = \+
     \IF M \not \in Deadlocked \THEN Victim(M):=false \+
                \ELSE ~ \ASM{Undo}(M)\dec\-
Recovered = \+
   \{M \mid ctl\mbox{-}state(M)=waitForRecovery \AND M \not \in Victim\}
\end{asm}

To define an $\ASM{Undo}(M)$ step we have to provide the details of
the function $recoveryUpd$ used above in $\ASM{RecoveryRecord}$. This
function collects for any given machine~$M$ and state~$S$ first of all
the $genuineRecoveryUpd$ates by which one can $\ASM{Restore}$ the
overwritten values in $GenuineWriteLoc$ations (i.e. locations to
which~$M$ in $S$ writes via an assignment instruction); in~\cite{BoeSch14}
where we considered only genuine updates this function was called
$overWrittenVal$. 

In addition, for each to be $\ASM{Aggregate}$d
update instruction $s:=_{op}t \in UpdInstr(M,S)$ $recoveryUpd$ collects the information to
compute the `inverse' update for~$M$, information that is needed when
the controller has to $\ASM{Undo}$ at the concerned location the
effect of that partial update by~$M$ (but not of simultaneous partial
updates concerning the same location by other machines). This
information consists in an operation $op'$ with the appropriate
value~$v'$ for its second argument, whereas the first argument is
provided only when the $\ASM{Undo}$ takes place. For the approach to
ASMs with partial updates defined in~\cite{SchWan10} and adopted here
one has to postulate that such operations and values $(op',v')$ which
are $inverse$ to partial update operations $(op,v)$ (where
$v=eval(t,currState(M))$) are defined and satisfy the following
constraint for partial update instructions:

\begin{asm}
\mbox{\bf Inverse Operation Postulate}\+
     \FORALL s:=_{op}t \in UpdInstr(M) \FORALL (op,v) \THEREIS (op',v') \WITH \+
          \FORALL w~~op'(op(w,v),v')=w
\end{asm}

This postulate can be justified by the requirement that any
transaction should be {\em recoverable}~\cite{ScRiDr00}. If
recoverability cannot be guaranteed, a transaction controller must (in
principle) be able to undo updates that were issued long ago, which
would be completely infeasible for any real system where once a
transaction has committed, it can leave the system, and none of its
updates can be undone any more. As partial updates operations
$(op_1,v_1)$ and $(op_2,v_2)$ from two different machines $M$ and $N$
could be executed simultaneously, for each of these operations it must
be forseen that it may be undone, even if the issueing transaction for the
other operation has already committed---this situation has become
possible. As the original value at location $l$ at the time of the
partial update by $M$ using $(op,v)$ is no longer available---anyway,
it may have been updated many times by other compatible partial
updates---$M$ must be able to undo its part of the update independently
from all other updates including \ASM{Undo}ne ones, i.e. to say after
\ASM{Undo}ing $(op,v)$, the resulting values at location $l$ must be
just the one that would have resulted, if only all other (not yet
\ASM{Undo}ne) partial updates had been executed. This is guaranteed by
the inverse operation postulate.

In the original work on multi-level transactions including \cite{BeBeGo89}
recovery is not handled at all, which leads to misleading conclusions
that commutativity of high-level operations---those that can be
defined by partial updates---is sufficient for obtaining increased
transaction throughput by means of additional permitted
schedules. However, commutativity (better called {\em operator
  compatibility}, see \cite{SchWan10}) has to be complemented by the
inverse operation postulate to ensure recoverability. Inverse operators
are claimed in the MLR recovery system \cite{Lomet92}, but no
satisfactory justification was given.

There may be more than one update instruction~$M$ performs for the
same location so that the corresponding inverse
$partialRecoveryUpd$ates have to be $\ASM{Aggregate}$d with the
$genuineRecoveryUpd$ates by $\ASM{Restore}$.

\begin{asm}
recoveryUpd(M,S) = \+
    (genuineRecoveryUpd(M,S),partialRecoveryUpd(M,S))\-
genuineRecoveryUpd(M,S)= ~ \{((f,args),val) \mid \+
  (f,args) \in GenuineWriteLoc(M,S) \AND val = eval(f(args),S)\} \-
partialRecoveryUpd(M,S)=~ \{(l,(op',v')) \mid \+
   \FORSOME f(t_1,\ldots,t_n):=_{op}t \in UpdInstr(M,S)\+
       l=(f,(eval(t_1,S),\ldots,eval(t_n,S))) \AND  \+
       (op',v')=inverse(op,eval(t,S))\}  
\end{asm}

\begin{asm}
\ASM{Undo}(M)= \+
  \LET (Upds,Locks) = youngest(history(M))\+
     \ASM{Restore}(Upds,M)\\
     \ASM{Release}(Locks,M)\\
     \ASM{Delete}((Upds,Locks),history(M))\dec\-
\WHERE \+
   \ASM{Restore}((G,P),M) =~ \ASM{Aggregate}(G \cup 
                      \mbox{ // NB. P is a multiset}\+
          \{\mid((f,args),op'(eval(f(args),currState(M)),v') \mid \+
                  ((f,args),(op',v')) \in P \mid\})
                   \mbox{  // NB. multiset notation } \{\mid ~ \mid\}\dec\-
   \ASM{Release}(L,M)= ~ 
        \FORALL l \in L ~ \ASM{Unlock}(l,M)
\end{asm}

The inverse operation postulate cannot guarantee that the
inverse operations commute with each other in general. However, it can
be guaranteed that on the values, to which the inverse operations are
applied in \ASM{Undo} steps, commutativity holds: For this let
$(op_i^\prime, v_i^\prime)$ be inverse for $(op_i, v_i)$ for $i=1,2$,
such that both operations $op_i$ are compatible and both inverse
operations $op_i^\prime$ have to execute simultaneously on location
$l$. That is, if $v$ is the actual value of $l$ in the current state,
we need to show $op_1^\prime(op_2^\prime(v,v_2^\prime),v_1^\prime) =
op_2^\prime(op_1^\prime(v,v_1^\prime),v_2^\prime)$. As these are
\ASM{Undo} operations, we can assume that $(op_i, v_i)$ for $i=1,2$
have been executed on some previous value of location $l$. Thus, due
to commutativity we must have $v= op_1(op_2(v^\prime,v_2),v_1) =
op_2(op_1(v^\prime,v_1),v_2)$ for some value $v^\prime$. From this we
get
\begin{gather*}
op_1^\prime(op_2^\prime(v,v_2^\prime),v_1^\prime) = 
op_1^\prime(op_2^\prime(op_2(op_1(v^\prime,v_1),v_2),v_2^\prime),v_1^\prime) = \\
op_1^\prime(op_1(v^\prime,v_1),v_1^\prime) = 
v^\prime =
op_2^\prime(op_2(v^\prime,v_2),v_2^\prime) = \\
op_2^\prime(op_1^\prime(op_1(op_2(v^\prime,v_2),v_1),v_1^\prime),v_2^\prime) =
op_2^\prime(op_1^\prime(v,v_1^\prime),v_2^\prime)
\end{gather*}

Note that in our description of the \ASM{DeadlockHandler} and the
(partial) \ASM{Recovery} we deliberately left the strategy for victim
selection and $\ASM{Undo}$ abstract, so fairness considerations will have to be
discussed elsewhere. It is clear that if always the same victim is
selected for partial recovery, the same deadlocks may be created again
and again. However, it is well known that fairness can be achieved by
choosing an appropriate victim selection strategy.

\subsection{The $\ASM{Abort}$ Component}

The $\ASM{Abort}$ component can be succinctly defined as turbo
ASM~\cite[Ch.4.1]{BoeSta03}:

\begin{asm}
\ASM{Abort}= ~ \FORALL M \in AbortRequest\+
   \ITERATE ~ \ASM{Undo}(M) \UNTIL history(M)=\emptyset\\
   Delete(M,TransAct)
\end{asm}
We use the $\ITERATE$ construct only here and do this for notational
convenience to avoid tedious programming of an iteration. We do not use 
$\ITERATE$ to form component ASMs which go into $TransAct$.
 
\section{Correctness Theorem}
\label{sect:Thm}

In this section we show the desired correctness property: if all
monitored or shared locations of any $M_i$ are output or controlled
locations of some other $M_j$ and all output locations of any $M_i$
are monitored or shared locations of some other $M_j$ (closed system
assumption)\footnote{This assumption means that the environment is
  assumed to be one of the component machines.}, each run
of $TA({\cal M},\ASM{TaCtl})$ is equivalent to a serialization of the
terminating $M_i$-runs, namely the $M_{i_{1}}$-run followed by the
$M_{i_{2}}$-run etc., where $M_{i_{j}}$ is the $j$-th machine of $\cal
M$ which performs a commit in the $TA({\cal M},\ASM{TaCtl})$ run. To
simplify the exposition (i.e. the formulation of statement and proof
of the theorem) we only consider machine steps which take place under
the transaction control, in other words we abstract from any
step~$M_i$ makes before being $\ASM{Insert}$ed into or after being
$\ASM{Delet}$ed from the set $TransAct$ of machines which currently
run under the control of $\ASM{TaCtl}$.

First of all we have to make precise what a {\em serial} multi-agent
ASM run is and what {\em equivalence} of $TA({\cal M},\ASM{TaCtl})$
runs means in the general multi-agent ASM framework.

\subsection{Definition of run equivalence} 

Let $S_0, S_1, S_2, \dots$ be a (finite or infinite) run of $TA({\cal
  M},\ASM{TaCtl})$. In general we may assume that \ASM{TaCtl} runs
forever, whereas each machine $M \in \mathcal{M}$ running as
transaction will be $Terminated$ or $\ASM{Abort}$ed at some time --
once $\ASM{Commit}$ed $M$ will only change values of non-shared and
non-output locations\footnote{It is possible that one ASM $M$ enters
  several times as a transaction controlled by \ASM{TaCtl}. However,
  in this case each of these registrations will be counted as a
  separate transaction, i.e. as different ASMs in $\mathcal{M}$.}. To
simplify the proof but without loss of generality we assume that each
update concerning an $\ASM{Abort}$ed machine is eliminated from the
run. For $i = 0,1,2,\dots$ let $\Delta_i,\Gamma_i$ denote the unique
set of genuine updates resp. multiset of partial updates leading to an
$\ASM{Aggregate}$d consistent set of updates defining the transition
from $S_i$ to $S_{i+1}$. By definition of $TA({\cal M},\ASM{TaCtl})$
each $\Delta_i,\Gamma_i$ is the union of the corresponding sets
resp. multisets\footnote{We indicate multiset operations by an upper
  index~$+$} of the agents executing $M \in \mathcal{M}$
resp. $\ASM{TaCtl}$:
\[ \Delta_i = \bigcup\limits_{M \in \mathcal{M}} \Delta_i(M) 
      \cup \Delta_i(\ASM{TaCtl})~~~
    \Gamma_i = \bigcup^+\limits_{M \in \mathcal{M}} \Gamma_i(M) 
      \cup^+ \Gamma_i(\ASM{TaCtl}). \]

$\Delta_i(M)$ contains the genuine and $\Gamma_i(M)$ the
partial updates defined by the machine $TA(M,\ASM{TaCtl})$ in state
$S_i$\footnote{We use the shorthand notation $\Delta_i(M)$ to denote
  $\Delta_i(TA(M,\ASM{TaCtl}))$, analogously $\Gamma_i(M)$; in other
  words we speak about steps and updates of~$M$ also when they really
  are done by~$TA(M,\ASM{TaCtl})$. Mainly this is about transitions
  between the control states, namely TA-$ctl$, $waitForLocks$,
  $waitForRecovery$ (see Fig.\ref{fig:TA(M,C)}), which are performed
  during the run of~$M$ under the control of the transaction
  controller $\ASM{TaCtl}$. When we want to name an original update
  of~$M$ (not one of the updates of $ctl\_state(M)$ or of the
  $\ASM{Record}$ component) we call it a proper $M$-update.},
$\Delta_i(\ASM{TaCtl})$ resp. $\Gamma_i(\ASM{TaCtl})$ contain the
genuine resp. partial updates by the transaction controller in this
state. The sequence
\[\Delta_0(M),\Gamma_0(M),\Delta_1(M),\Gamma_1(M), \Delta_2(M),\Gamma_2(M) \dots\]
will be called the {\em schedule} of $M$ (for the given transactional
run).

To generalise for transactional ASM runs the equivalence of
transaction schedules known from database systems
\cite[p.621ff.]{elmasri:2006} we now define two {\em cleansing
  operations} for ASM schedules. By the first one (i) we eliminate all
(in particular unsuccessful-lock-request) computation segments which
are without proper $M$-updates; by the second one (ii) we eliminate
all $M$-steps which are related to a later $\ASM{Undo}(M)$ step by the
$\ASM{Recovery}$ component:

\begin{enumerate}

\item Delete from the schedule of $M$ each $\Delta_i(M),\Gamma_i(M)$ where one
  of the following two properties holds:

    \begin{itemize}
    \item $\Delta_i(M)=\Gamma_i(M)=\emptyset$ ($M$ contributes no
      update to $S_i$),
   \item $\Delta_i(M)$ belongs to a step of an $M$-computation segment
     where~$M$ in its $ctl\_state(M)=$TA-$ctl$ does
     $\ASM{CallLockHandler}(M)$ and in its next step moves
     from $waitForLocks$ back to control state TA$-ctl$ because the
     $\ASM{LockHandler}$ $RefusedLocksTo(M)$.\footnote{By
       eliminating this $\ASM{CallLockHandler}(M)$ step also the
       corresponding $\ASM{LockHandler}$ step
       $\ASM{HandleLockRequest}(M)$ disappears in the run.}
     \end{itemize}

In such computation steps~$M$ makes no proper update.

\item Repeat choosing from the schedule of $M$ a pair
  $\Delta_j(M),\Gamma_j(M)$ with later $\Delta_{j'}(M),\Gamma_{j'}(M)$
  ($j<j'$) which belong to consecutive
  $M$-Recovery entry resp. exit steps defined as follows:

  \begin{itemize}
    \item a (say $M$-RecoveryEntry) step whereby~$M$ in state $S_j$
      moves from TA-$ctl$ to $waitForRecovery$ because it became a
      $Victim$,
    \item the next $M$-step (say $M$-RecoveryExit) whereby~$M$ in
      state $S_{j'}$ moves back to control state TA-$ctl$ because it has
      been $Recovered$.
  \end{itemize} 

In these two $M$-Recovery steps~$M$ makes no proper update. Delete:

\begin{enumerate}
\item $\Delta_j(M),\Gamma_j(M)$ and $\Delta_{j'}(M),\Gamma_{j'}(M)$, 
\item the $((Victim,M),true)$ update from the corresponding
  $\Delta_t(\ASM{TaCtl})$ ($t< j$) which in state $S_j$ triggered the
  $M$-RecoveryEntry,
\item $\ASM{TryToRecover}(M)$-updates in any
  $\Delta_{i+k}(\ASM{TaCtl}),\Gamma_{i+k}(\ASM{TaCtl})$ between
  the considered $M$-RecoveryEntry and $M$-RecoveryExit step
  (for~$i$ as below with $i<j<i+k<j'$),
\item each $\Delta_{i'}(M),\Gamma_{i'}(M)$ belonging to the
  $M$-computation segment from TA-$ctl$ back to TA-$ctl$ which
  contains the proper $M$-step in $S_i$ that is
  $\ASM{UNDO}$ne in $S_{i+k}$ by the considered
  $\ASM{TryToRecover}(M)$ step. Besides control state and
  $\ASM{Record}$ updates these $\Delta_{i'}(M)$ contain genuine
  updates $(\ell,v)$ with $\ell =
  (f,(eval(t_1,S_i),\dots,eval(t_n,S_i)))$ where the corresponding
  $\ASM{Undo}$ updates are 
\[(\ell,eval(f(t_1,\dots,t_n),S_i)) \in  \Delta_{i+k}(\ASM{TaCtl})\]
 
the $\Gamma_{i'}(M)$ contain partial updates
\[(f,(eval(t_1,S),\ldots , eval(t_n,S))),op(eval(f(t_1,\ldots,t_n),S)),eval(t),S))\]

for update instructions $f(t_1,\ldots,t_n):=_{op}t$ of~$M$ in $S_{i'}$
whose effect is $\ASM{Undo}$ne when $\ASM{Recovery}$
$\ASM{Aggregate}$s the $M$-specific partial update $(l,(op',v'))$ with
the inverse operation $(op',v')$ to $(op,eval(t,S_{i'}))$ on~$l$.

 \item the $\ASM{HandleLockRequest}(M)$-updates in
   $\Delta_{l\prime}(\ASM{TaCtl})$ corresponding to $M$'s step by
   $\ASM{CallLockHandler}$ (if any: in case $newLocks$ are needed
   for the proper $M$-step in $S_i$) in state $S_l$ ($l<l^\prime<i$).

 \end{enumerate}

\end{enumerate}

The sequence $\Delta_{i_1}(M), \Gamma_{i_1}(M),
\Delta_{i_2}(M),\Gamma_{i_1}(M), \dots$ with $i_1 < i_2 < \dots$
resulting from the application of the two cleansing operations as long
as possible will be called the {\em cleansed schedule} of $M$ (for the
given run).  Note that the sequence is uniquely defined because
confluence results from the fact that the deletion order chosen in
step (i) or step (ii) does not matter.

Before defining the equivalence of transactional ASM runs let us
remark that $TA({\cal M},\ASM{TaCtl})$ has indeed several runs, even
for the same initial state $S_0$. This is due to the fact that a lot
of non-determinism is involved in the definition of this ASM. First,
the submachines of \ASM{TaCtl} are non-deterministic:

\begin{itemize}

\item In case several machines $M, M^{\prime} \in \mathcal{M}$ request
  conflicting locks at the same time, the \ASM{LockHandler} can only
  grant the requested locks for one of these machines.

\item Commit requests are executed in random order by the \ASM{Commit} submachine.

\item The submachine \ASM{DeadlockHandler} chooses a set of victims,
  and this selection has been deliberately left abstract.

\item The \ASM{Recovery} submachine chooses in each step a victim $M$,
  for which the last step is $\ASM{Undo}$ together with releasing
  corresponding locks.

\end{itemize}

Second, the specification of $TA({\cal M},\ASM{TaCtl})$ leaves
deliberately open, when a machine $M \in \mathcal{M}$ will be started,
i.e., register as a transaction in $TransAct$ to be controlled by
\ASM{TaCtl}. This is in line with the common view that transactions $M
\in \mathcal{M}$ can register at any time to the transaction
controller \ASM{TaCtl} and will remain under its control until they
commit.

\begin{definition}\rm

\ Two runs $S_0,S_1,S_2,\dots$ and
$S_0^{\prime},S_1^{\prime},S_2^{\prime},\dots$ of $TA({\cal
  M},\ASM{TaCtl})$ are {\em equivalent} iff for each $M \in
\mathcal{M}$ the cleansed schedules   

\[\Delta_{i_1}(M),\Gamma_{i_1}(M),\Delta_{i_2}(M),\Gamma_{i_2}(M), \dots\] 

and

\[\Delta_{j_1}^\prime(M),\Gamma_{j_1}^\prime(M),
\Delta_{j_2}^\prime(M), \Gamma_{j_2}^\prime(M),\dots\] 

for the two runs are the same and the read locations and the values
read by~$M$ in $S_{i_k}$ and $S_{j_k}'$ are the same.

\end{definition}

That is, we consider runs to be equivalent, if all 
transactions $M \in \mathcal{M}$ read the same locations and see there
the same values and perform the same updates in the same order
disregarding waiting times and updates that are undone.

\subsection{Definition of serializability} 

Next we have to clarify our generalised notion of a serial run, for
which we concentrate on committed transactions -- transactions that
have not yet committed can still undo their updates, so they must be
left out of consideration\footnote{Alternatively, we could concentrate
  on complete, infinite runs, in which only committed transactions
  occur, as eventually every transaction will commit -- provided that
  fairness can be achieved.}. As stated above $\ASM{Abort}$ed
transactions are assumed to be eliminated from the run right at the
beginning. We need a definition of the read- and write-locations of
$M$ in a state $S$, i.e. $ReadLoc(M,S)$ and $WriteLoc(M,S)$ as used in
the definition of $newLocks(M,S)$.

We define $ReadLoc(M,S) = ReadLoc(r,S)$ and analogously
$WriteLoc(M,S)$ $= WriteLoc(r,S)$, where $r$ is the defining rule of
the ASM $M$. Then we use structural induction according to the
definition of ASM rules in ~\cite[Table 2.2]{BoeSta03}. As an
auxiliary concept we need to define inductively the read and write
locations of terms and formulae. The definitions use an
interpretation~$I$ of free variables which we suppress notationally
(unless otherwise stated) and assume to be given with (as environment
of) the state~$S$. This allows us to write $ReadLoc(M,S)$,
$WriteLoc(M,S)$ instead of $ReadLoc(M,S,I)$, $ReadLoc(M,S,I)$
respectively.

\subsubsection{Read/Write Locations of Terms and Formulae.}

\begin{asm}
ReadLoc(x,S)= WriteLoc(x,S)= \emptyset \mbox{ for variables }x \\
ReadLoc(f(t_1, \ldots ,  t_n) ,S)=\+
           \{(f,(eval(t_1,S), \ldots , eval(t_n,S)))\} 
                 ~\cup ~\bigcup_{1 \leq i \leq n}ReadLoc(t_i,S)\-
WriteLoc(f(t_1 , \ldots , t_n),S)=\{(f,(eval(t_1,S), \ldots , eval(t_n,S)))\}
\end{asm}
Logical variables (to be distinguished from programming variables
which are treated in the ASM framework as 0-ary functions and thus
stand for locations) appear in the $\LET$, $\FORALL$, $\CHOOSE$
constructs and are not locations: they cannot be written and their values
are not stored in a location but in the given interpretation~$I$ from
where they can be retrieved.

We define $WriteLoc(\alpha,S)=\emptyset$ for every formula $\alpha$
because formulae are not locations one could write into.
$ReadLoc(\alpha,S)$ for atomic formulae $P(t_1 , \ldots , t_n )$ has to be
defined as for terms with $P$ playing the same role as a function
symbol~$f$. For propositional formulae one reads the locations of
their subformulae. In the inductive step for quantified formulae
$domain(S)$ denotes the superuniverse of~$S$
minus the Reserve set~\cite[Ch.2.4.4]{BoeSta03} and 
$I_{x}^{d}$ the extension (or modification) of~$I$ where~$x$ is
interpreted by a domain element~$d$.

\begin{asm}
ReadLoc(P(t_1 , \ldots , t_n ),S)= \+
   \{(P,(eval(t_1,S), \ldots , eval(t_n,S)))\} 
                   ~\cup ~ \bigcup_{1 \leq i \leq n}ReadLoc(t_i,S)\-
ReadLoc(\neg \alpha)= ReadLoc(\alpha)\\
ReadLoc(\alpha_1 \wedge \alpha_2) = ReadLoc(\alpha_1) \cup ReadLoc(\alpha_2)\\
ReadLoc(\forall x \alpha,S,I)= \bigcup_{d \in domain(S)}ReadLoc(\alpha,S,I_{x}^{d})
\end{asm}
\noindent The values of the logical variables are not read
from a location but from the modified state environment function~$I_{x}^{d}$.

\subsubsection{Read/Write Locations of ASM Rules.}

\begin{asm}
ReadLoc(\SKIP,S)=WriteLoc(\SKIP,S)= \emptyset\\
ReadLoc(t_1 := t_2,S)= ReadLoc(t_1 :=_{op} t_2,S)=\+
          ReadLoc(t_1,S) \cup ReadLoc(t_2,S)\-
WriteLoc(t_1 := t_2,S)= WriteLoc(t_1 :=_{op} t_2,S)= WriteLoc(t_1,S)\\
ReadLoc(\IF \alpha \THEN r_1 \ELSE r_2,S) = \+
     ReadLoc(\alpha,S) \cup \left\{ 
           \begin{array}{ll}
             ReadLoc(r_1,S) & \IF eval(\alpha,S)=true \\
             ReadLoc(r_2,S) & \ELSE 
            \end{array} \right.\-
WriteLoc(\IF \alpha \THEN r_1 \ELSE r_2,S) =
          \left\{ 
           \begin{array}{ll}
             WriteLoc(r_1,S) & \IF eval(\alpha,S)=true\\
             WriteLoc(r_2,S) & \ELSE 
           \end{array} \right. \\
ReadLoc(\LET x = t \IN r,S,I)= ReadLoc(t,S,I) \cup ReadLoc(r,S,I_x^{eval(t,S)})\\
WriteLoc(\LET x = t \IN r,S,I)= WriteLoc(r,S,I_x^{eval(t,S)}) \mbox{ // call by value}\\
ReadLoc(\FORALL x \WITH \alpha \DO r,S,I)=  \+
     ReadLoc(\forall x  \alpha,S,I) ~\cup ~ 
                \bigcup_{a \in range(x,\alpha,S,I)}ReadLoc(r,S,I_x^a)\+
     \WHERE range(x,\alpha,S,I)= \{d \in domain(S) \mid 
                eval(\alpha,(S,I_x^d)) = true\}\dec\-
WriteLoc(\FORALL x \WITH \alpha \DO r,S,I)=  
     \bigcup_{a \in range(x,\alpha,S,I)}WriteLoc(r,S,I_x^a)\\
\end{asm}
In the following cases the same scheme applies to read and write
locations:\footnote{In $yields(r_1,S,I,U)$~$U$ denotes the update set
  produced by rule~$r_1$ in state~$S$ under~$I$.}
\begin{asm}
Read[Write]Loc(r_1 \PAR r_2,S)= \+
           Read[Write]Loc(r_1,S) \cup Read[Write]Loc(r_2,S)\-
Read[Write]Loc(r(t_1, \ldots, t_n),S) = Read[Write]Loc(P(x_1/t_1,\ldots,x_n/t_n),S)  \+
     \WHERE r(x_1,\ldots,x_n)=P \mbox{ // call by reference}\-
Read[Write]Loc(r_1 \SEQ r_2,S,I) = Read[Write]Loc(r_1,S,I) \cup \+
    \left\{ 
           \begin{array}{ll}
            Read[Write]Loc(r_2,S+U,I) & \IF
            yields(r_1,S,I,U) \AND Consistent(U)\\  
            \emptyset & \ELSE
     \end{array} \right.
\end{asm}
Due to the assumption that for component machines $M \in TransAct$
nondeterminism is expressed by choice functions no further clause is
needed to define $ReadLoc$ and $WriteLocs$ for machines of form
$\CHOOSE x \WITH \alpha \DO r$.

We say that~$M$ has or is committed (in state~$S_i$, denoted
$Committed(M,S_i)$) if step $\ASM{Commit}(M)$ has been performed (in
state~$S_i$).

\begin{definition}\rm

\ A run of $TA({\cal M},\ASM{TaCtl})$ is {\em serial} iff there is a
total order $<$ on $\mathcal{M}$ such that the following two
conditions are satisfied:

\begin{enumerate}

\item If in a state $M$ has committed, but $M^\prime$ has not, then
  $M < M^\prime$ holds.

\item If $M$ has committed in state $S_i$ and $M < M^\prime$ holds,
  then the cleansed schedule
\[ \Delta_{j_1}(M^\prime),\Gamma_{j_1}(M^\prime),
  \Delta_{j_2}(M^\prime), \Gamma_{j_2}(M^\prime),\dots \]

of $M^\prime$ satisfies $i < j_1$.

\end{enumerate}

\end{definition}

That is, in a serial run all committed transactions are executed
in a total order and are followed by the updates of transactions that have not
yet committed.

\begin{definition}\rm

\ A run of $TA({\cal M},\ASM{TaCtl})$ is {\em serialisable} iff it is
equivalent to a serial run of $TA({\cal
  M},\ASM{TaCtl})$.\footnote{Modulo the fact that ASM steps permit
  simultaneous updates of multiple locations, for ASMs with only
  genuine updates this definition of serializability is equivalent to
  Lamport's sequential consistency concept~\cite{Lamport79}.}

\end{definition}

\subsection{Serializability Proof} 

\begin{theorem}

\ Each run of $TA({\cal M},\ASM{TaCtl})$ is serialisable.

\end{theorem}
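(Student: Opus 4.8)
The plan is to take an arbitrary run $S_0, S_1, S_2, \dots$ of $TA(\mathcal{M},\ASM{TaCtl})$ and exhibit an equivalent serial run, using as serialisation order the commit order: set $M < M'$ whenever $M$ performs its $\ASM{Commit}(M)$ step strictly before $M'$ does, with machines that never commit placed, in any fixed compatible order, after all committed ones. This is exactly the order announced at the start of Section~\ref{sect:Thm}. Since $\ASM{Abort}$ed transactions are by assumption already removed from the run, and non-committed transactions are confined to the tail by condition~(1) of seriality, it suffices to reorder the cleansed schedules of the committed transactions into commit order while preserving, for every $M$, both its cleansed schedule and the locations it reads together with the values read there. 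Concretely I would transform the given run into the commit-order serial run by a sequence of local rearrangements, arguing that each rearrangement preserves run equivalence.

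First I would establish the \emph{swapping lemma}: two consecutive aggregated transitions $S_k \to S_{k+1} \to S_{k+2}$ whose proper updates stem from different transactions $M \neq M'$ may be exchanged, yielding the same state $S_{k+2}$ and leaving every cleansed schedule and every read value unchanged, provided $M$ and $M'$ do not conflict on any shared location in these steps. Non-conflict means that on each commonly touched location the two operations are compatible: for read/read or disjoint genuine writes this is immediate, while for two compatible partial operations $op,op'$ the very definition of $Compatible(op,op',l)$ (order-independence of the result) ensures that $\ASM{Aggregate}$ yields the same genuine update in either order. The subsumption clause of $newLocks$ extends this across nesting levels, since a $temp$-lock on every location subsuming a written one prevents any other machine from simultaneously updating an ancestor or descendant location with which the partial update would fail to commute; the same remark handles the $Partner(M)$-bundling, as partners updating a common location do so only via mutually compatible operations.

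The second ingredient is the \emph{conflict-ordering lemma}, i.e. the two-phase property. If $M$ and $M'$ genuinely conflict on some location $l$, then the locking discipline forces their conflicting accesses to occur already in commit order. By the definitions of $CannotBeGranted$ and $Blocks$, an incompatible operation lock held by $N \in TransAct$ (or a lock on a subsuming location) blocks any grant to another machine, and by $\ASM{Commit}$ such long-lived locks are released only at commit; moreover, by the remark following $\ASM{Commit}$, a machine is never simultaneously in a $LockRequest$ and a $Victim$ization. Hence whichever of $M,M'$ acquired the conflicting lock first retains it until its own commit, so the other can perform its conflicting access only afterwards, and their relative order therefore agrees with $<$. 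Combining the two lemmas, I would repeatedly exchange adjacent non-conflicting proper steps; conflicting ones are already ordered by the conflict-ordering lemma, so the process terminates in a run that groups each committed transaction contiguously in commit order, satisfies both conditions of seriality, and is equivalent to the original run.

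The part I expect to be the main obstacle is the interaction of the $\ASM{Recovery}$ mechanism with this rearrangement, namely verifying that cleansing operation~(ii) legitimately deletes all $\ASM{Undo}$ne steps so that the surviving schedule is the one the sorting acts upon. Here the \textbf{Inverse Operation Postulate} and the commutativity-on-reachable-values computation displayed in Section~\ref{sect:recovery} are essential: they guarantee that when $\ASM{Recovery}$ $\ASM{Aggregate}$s the $M$-specific inverse partial update $(l,(op',v'))$, the value at $l$ is restored exactly to what it would have been had only the other, not-yet-$\ASM{Undo}$ne partial updates been applied, independently of whether the transactions that issued those updates have already committed. Thus the genuine and partial updates removed in step~(ii), together with their recovery records, cancel out, leaving a cleansed schedule identical to one in which the victimised attempt never occurred, and the reads are unaffected because a victimised $M$ makes no proper update between its $M$-RecoveryEntry and $M$-RecoveryExit. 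With this in hand the sorting of the previous paragraph operates purely on surviving committed proper steps, and the serial run it produces is equivalent to the original, which proves that every run of $TA(\mathcal{M},\ASM{TaCtl})$ is serialisable.
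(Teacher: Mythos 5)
Your argument is correct in substance but takes a genuinely different route from the paper's. You give the classical two-phase-locking serialisability proof: a conflict-ordering lemma (incompatible or subsuming locks are held until $\ASM{Commit}$, so conflicting accesses of distinct transactions already occur in commit order) combined with a swapping lemma that bubble-sorts the remaining non-conflicting adjacent steps into contiguous per-transaction blocks. The paper instead proceeds by induction on the first-committing machine $M_1$: it re-runs $TA(\{M_1\},\ASM{TaCtl})$ alone from $S_0$ to obtain a finite serial prefix, shows (Lemma~\ref{lem1}) that deleting every update by or concerning $M_1$ from the cleansed original run yields a legal run of $TA({\cal M}-\{M_1\},\ASM{TaCtl})$ starting in the prefix's final state, shows (Lemma~\ref{lem2}) that the concatenation is equivalent to the original run, and then invokes the induction hypothesis on the remainder. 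The core justifications coincide: the ``shift'' cases of Lemma~\ref{lem1} and the ``values read are unaffected because locks are kept until commit'' argument of Lemma~\ref{lem2} are precisely your swapping and conflict-ordering lemmas, and both treatments use operator compatibility and the Inverse Operation Postulate for cleansing in the same way. What the paper's decomposition buys is that the serial run is constructed directly, one committed transaction at a time, so it never has to argue termination of a rewriting process. Your route makes the conflict-graph reasoning explicit and is closer to the textbook 2PL proof, but to complete it you would still need to (a) disaggregate a single $\ASM{Fire}$ step that bundles compatible partial updates of several $Partner$ machines on a shared location into per-transaction steps, justified by the order-independence in the definition of $Compatible$; (b) establish termination of the adjacent-swap process, e.g.\ by counting inversions against the commit order; and (c) specify how the interleaved $\ASM{TaCtl}$ steps (granted lock requests, $\ASM{Commit}$ steps) travel with the machine steps they serve. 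None of these is a gap in principle, but they are where the remaining bookkeeping lives.
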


\begin{proof}
\ Let $S_0,S_1,S_2,\dots$ be a run of $TA({\cal M},\ASM{TaCtl})$. To
construct an equivalent serial run let $M_1 \in \mathcal{M}$ be a
machine that commits first in this run, i.e. $Committed(M,S_i)$
holds for some $i$ and whenever $Committed(M,S_j)$ holds for some $M
\in \mathcal{M}$, then $i \le j$ holds. If there is more than one
machine $M_1$ with this property, we randomly choose one of them.

Take the run of $TA(\{ M_1 \},\ASM{TaCtl})$ starting in state $S_0$,
say $S_0, S_1^\prime, S_2^\prime, \dots, S_n^\prime$. As $M_1$
commits, this run is finite. $M_1$ has been $\ASM{Delete}$d from
$TransAct$ and none of the $\ASM{TaCtl}$ components is triggered any
more: neither $\ASM{Commit}$ nor $\ASM{LockHandler}$ because
$CommitRequest$ resp. $LockRequest$ remain empty; not
$\ASM{DeadlockHandler}$ because $Deadlock$ remains false since~$M_1$
never $Wait$s for any machine; not $\ASM{Recovery}$ because$Victim$
remains empty. Note that in this run the schedule for $M_1$ is already
cleansed.

We now define a run $S_0^{\prime\prime} , S_1^{\prime\prime},
S_2^{\prime\prime}, \dots$ (of $TA({\cal M} - \{ M_1 \},\ASM{TaCtl})$,
as has to be shown) which starts in the final state $S_n^\prime =
S_0^{\prime\prime}$ of the $TA(\{ M_1 \},\ASM{TaCtl})$ run and where
we remove from the run defined by the cleansed schedules
$\Delta_i(M),\Gamma_i(M)$ for the originally given run all updates
made by steps of~$M_1$ and all updates in $\ASM{TaCtl}$ steps which
concern~$M_1$ (i.e. which are related to a lock or commit request by
$M_1$ or a victimization of $M_1$ or a $\ASM{TryToRecover}(M_1)$
step). Let
\[ \Delta_i^{\prime\prime} = \bigcup\limits_{M \in \mathcal{M} - \{M_1\} } \Delta_i(M) \cup 
\{ (\ell,v) \in \Delta_i(\ASM{TaCtl}) \mid 
          (\ell,v) \;\text{does not concern $M_1$} \} , \]
\[ \Gamma_i^{\prime\prime} = \bigcup^+\limits_{M \in \mathcal{M} - \{M_1\} } \Gamma_i(M) \cup^+ 
\{ (\ell,v) \in \Gamma_i(\ASM{TaCtl}) \mid 
          (\ell,v) \;\text{does not concern $M_1$} \} . \]

That is, in $\Delta_i^{\prime\prime},\Gamma_i^{\prime\prime}$ all updates are removed from the
original run which are done by $M_1$---their effect is reflected
already in the initial run segment from $S_0$ to $S_n^\prime $---or
are $\ASM{LockHandler}$ updates involving a $LockRequest(M_1)$ or
are $Victim(M_1):=true$ updates of the $\ASM{DeadlockHandler}$ or are
updates involving a $\ASM{TryToRecover}(M_1)$ step or are done by
a step involving a $\ASM{Commit}(M_1)$.

\begin{lemma}\label{lem1}

\ $S_0^{\prime\prime}, S_1^{\prime\prime}, S_2^{\prime\prime}, \dots$
is a run of $TA({\cal M} - \{ M_1 \},\ASM{TaCtl})$.

\end{lemma}

\begin{lemma}\label{lem2}

\ The run $S_0, S_1^\prime, S_2^\prime, \dots, S_n^\prime,
S_1^{\prime\prime}, S_2^{\prime\prime}, \dots$ of $TA({\cal
  M},\ASM{TaCtl})$ is equivalent to the original run
$S_0,S_1,S_2,\dots$.

\end{lemma}

By induction hypothesis $S_0^{\prime\prime}, S_1^{\prime\prime},
S_2^{\prime\prime}, \dots$ is serialisable, so
$S_0,S_1^\prime,S_2^\prime,\dots$ and thereby also $S_0,S_1,S_2,\dots$ is
serialisable with $M_1 < M$ for all $M \in \mathcal{M} - \{ M_1 \}$.\hfill
\end{proof}

\begin{proof}\textbf{(Lemma \ref{lem1})}
\ Omitting in $\Delta_i^{\prime\prime},\Gamma_i^{\prime\prime}$ from
$\Delta_i(\ASM{TaCtl}),\Gamma_i(\ASM{TaCtl})$ every update which
concerns $M_1$ leaves updates by $\ASM{TaCtl}$ in $S_i^{\prime\prime}$
concerning $M \neq M_1$.

It remains to show that every $\ASM{Fire}(M)$-step defined by
$\Delta_i(M),\Gamma_i(M)$ is a possible $\ASM{Fire}(M)$-step via
$\Delta_i^{\prime\prime},\Gamma_i^{\prime\prime}$ in a $TA({\cal M} -
\{ M_1 \},\ASM{TaCtl})$ run starting in $S_0''$.  Since the considered
$M$-schedule $\Delta_i(M),\Gamma_i(M)$ is cleansed, we only have to consider any
proper update step of $M$ in state~$S_i''$ (together with its
preceding lock request step, if any).

Case 1. $M$ for its steps uses $newLocks$ and some of the to-be-locked
locations are also $LockedBy(M_1)$.

Case 1.1. Some of the $newLocks$ granted to~$M$ are incompatible with
some of the locks granted to $M_1$. Then due to cleansing the
$newLocks$ are requestedd by~$M$ after $\ASM{Commit}(M_1)$ so that the
lock race between~$M$ and~$M_1$ is eliminated.

Case 1.2. The $newLocks$ granted to~$M$ are compatible with all locks
granted to $M_1$. The compatibility permits to shift the considered
proper $M$-step to after the next proper $M_1$-step.

Case 2. $M$ for its step uses $newLocks$ for locations but none of
them is $LockedBy(M_1)$. Then this~$M$-step can be shifted like in
Case 1.2.

Case 3.  $M$ for its step needs no $newLocks$. Then all needed locks
have been granted before and to those preceding steps the argument for
Case 1 and Case 2 applies by induction. \hfill
\end{proof}

\begin{proof}\textbf{(Lemma \ref{lem2})}
\ The cleansed machine schedules in the two runs, the read locations
and the values read there have to be shown to be the same.  First
consider any $M \not = M_1$. Since in the initial segment $S_0,
S_1^\prime, S_2^\prime, \dots, S_n^\prime$ no such~$M$ makes any move
so that its update sets in this computation segment are empty, in the
cleansed schedule of~$M$ for the run $S_0, S_1^\prime, S_2^\prime,
\dots, S_n^\prime, S_1^{\prime\prime}, S_2^{\prime\prime}, \dots$ all
these empty update sets disappear. Thus this cleansed schedule is the
same as the cleansed schedule of $M$ for the run
$S_n^\prime,S_1^{\prime\prime}, S_2^{\prime\prime}, \dots$ and
therefore by definition of $\Delta_i^{\prime\prime}(M) = \Delta_i(M)$
and $\Gamma_i^{\prime\prime}(M) = \Gamma_i(M)$ also for the original
run $S_0,S_1,S_2,\dots$ with same read locations and same values read
there.

Now consider $M_1$, its schedule 

\[\Delta_0(M_1),\Gamma_0(M_1),\Delta_1(M_1), \Gamma_1(M_1),\dots\]

 for the run $S_0,S_1,S_2,\dots$ and the corresponding cleansed
 schedule 

\[\Delta_{i_0}(M_1), \Gamma_{i_0}(M_1), \Delta_{i_1}(M_1), \Gamma_{i_1}(M_1), \dots\]. 

We proceed by induction on the cleansed schedule steps of $M_1$.  When
$M_1$ makes its first step using the updates in
$\Delta_{i_0}(M_1),\Gamma_{i_0}(M_1)$, this can only be a
$\ASM{Fire}(M_1)$-step together with the corresponding
$\ASM{RecoveryRecord}$ updates (or a lock request directly preceding
such a $\Delta_{i_1}(M_1),\Gamma_{i_1}(M_1)$-step) because in the
computation with cleansed schedule each lock request of $M_1$ is
granted and $M_1$ is not $Victim$ized. The values $M_1$ reads or
writes in this step have not been affected by a preceding incompatible
step of any $M \not = M_1$---otherwise~$M$ would have locked before
the corresponding locations and keep the locks until it commits (since
cleansed schedules are without $\ASM{Undo}$ steps), preventing $M_1$
from getting these locks which contradicts the fact that $M_1$ is the
first machine to commit and thus the first one to get the
locks. Therefore the values $M_1$ reads or writes in the step defined
by $\Delta_{i_0}(M_1),\Gamma_{i_0}(M_1)$ (resp. also
$\Delta_{i_1}(M_1),\Gamma_{i_1}(M_1)$) coincide with the corresponding
location values in the first (resp. also second) step of $M_1$
following the cleansed schedule with the same compatible updates of
partners of~$M$ to pass from $S_0$ to $S_1^\prime$ (case without
request of $newLocks$) resp. from $S_0$ to $S_1^\prime$ to
$S_2^\prime$ (otherwise). The shared updates of~$M_1$ are the same in
both runs by definition. The same argument applies in the inductive
step which establishes the claim.\hfill
\end{proof}

\section{Conclusion}

In this article we specified in terms of Abstract State Machines a
multi-level transaction controller $\ASM{TaCtl}$ and a multi-level
transaction operator which turns the behaviour of a set of concurrent
programs into a transactional one under the control of
$\ASM{TaCtl}$. In this way the locations shared by the programs and
possibly containing complex (hierarchically structured) values are
accessed in a well-defined manner. For this we proved that all
concurrent transactional runs are serialisable.

The relevance of the transaction operator is that it permits to
concentrate on the specification of program behavior ignoring any
problems resulting from the use of shared possibly nested
locations. That is, specifications can be written in a way that shared
locations, including those which contain complex values, are treated
as if they were exclusively used by a single program. This is valuable
for numerous applications, as shared locations (in particular
in a database with complex values) are common, and random access to them is
hardly ever permitted.

Furthermore, by shifting transaction control into the rigorous
framework of Abstract State Machines we made several extensions to
transaction control as known from the area of databases
\cite{elmasri:2006}. In the classical theory schedules are sequences
containing read- and write-operations of the transactions plus the
corresponding read- and write-lock and commit events, i.e., only one
such operation or event is treated at a time. In our case we exploited
the inherent parallelism in ASM runs, so we always considered an
arbitrary update set with usually many updates at the same time. Under
these circumstances we generalised the notion of schedule and
serialisability in terms of the synchronous parallelism of ASMs. More
importantly we included also partial updates to cope with (a
generalization of) multi-level transactions. In this way we stimulate
more parallelism in transactional systems. We were also able to
strengthen the multi-level transaction model by adding further
clarification about the dependencies between the levels---actually, we
showed that a strict organisation into levels is not required, as long
as subsumption dependencies are taken into consideration---and about
the necessity to provide inverse operators for the partial updates
that are used for higher-level operations.

Among further work we would like to be undertaken is to provide for
the transaction controller and the $TA$ operator specified in this
paper a (proven to be correct) implementation, in particular as
plug-in for the CoreASM~\cite{CoreASM,FaGeGl06} or
Asmeta~\cite{AsmMwebsite} interpreter engines. This needs in
particular a careful analysis of the subsumtion criterion. Note
however that the update instruction set concept in CoreASM realizes
the concept of partial updates as used here and defined
in~\cite{SchWan10}. We would also like to see refinements or
adaptations of our transaction controller model for different
approaches to serialisability~\cite{gray:1993}, to multi-level
transaction protocols~\cite{KiScZh09} and to other approaches to
transaction handling,
e.g. \cite{elmasri:2006,gray:1993,KiScZh09,ScRiDr00}. Last but not
least we would like to see further detailings of our correctness proof
to a mechanically verified one, e.g. using the ASM theories developed
in KIV (see~\cite{kiv} for an extensive list of relevant publications)
and PVS~\cite{GarRic00,GoVhLa96,Verifix96} or the
(Event-~\cite{Abrial10}) B~\cite{Abrial96} theorem prover for an
(Event-) B transformation of $TA({\cal M},\ASM{TaCtl})$ (as suggested
in~\cite{GlaHLR13}).

\paragraph{Acknowledgement} 
The attempt to define a plug-in transaction controller concept one can
apply to introduce sequential components into a concurrent computation
under transactional constraints was partly motivated by the ATM case
study presented to the Dagstuhl seminar reported
in~\cite{GlaHLR13}. The approach of this paper has been used for the
specification of the ASM modeling the ATM in~\cite{BoeZen15}. We thank
an anonymous referee for his insightful reading of our paper and in
particular for having pointed out a flaw in the original manuscript.
%%%%%%%%%%%%%%%%%%%%%%%%%%%%%%%%%%%%%%%%%%%%%%%%%%%%%%%%%%%%%%%%%%%%%%%%
\def\note#1{}

\begin{appendix}

\section{Partial Updates}

\renewcommand{\thesection}{\Alph{section}}

The problem of partial updates in ASMs was first studied by Gurevich and Tillmann \cite{Gurevich-PartialUpdates01}. They observed that partial updates naturally arise in the context of synchronous parallel systems, and the problem has also manifested itself in the development of AsmL, an ASM-based specification language \cite{GurevichASML}. Although in principle partial updates can be avoided in the traditional ASM setting
     by explicitly formulating all intended partial updates to a structure by genuine updates,
     this can turn out to become rather cumbersome and in fact, AsmL
     required a solution that allows a programmer to freely use partial updates to modify counters, sets and maps in the main program and in submachines and in submachines of submachines, etc. without worrying
how submachines will report modifications and how to integrate modifications.
Therefore, Gurevich and Tillmann studied the problem of partial updates over the data types \emph{counter}, \emph{set} and \emph{map} \cite{Gurevich-PartialUpdates01}.  To develop a systematical approach for partial updates, they proposed an algebraic framework in which a \emph{particle} was defined as an unary modification operation over a
data type, and a parallel composition of particles as an abstraction of order-independent sequential composition. In doing so, they defined a partial update as a pair $(l, p)$ where $l$ is a location as in the standard ASMs, but $p$ is a particle which is different from a value $v$ in a genuine update $(l, v)$ of the traditional ASMs.

Nonetheless, Gurevich and Tillmann later realised that the previous framework was too limited, for example,  it failed to address partial updates over the data types
\emph{sequence} or \emph{labeled ordered trees} as exemplified in \cite{GurevichPartialUpdate03}.
This limitation led to the formalisation of {\em applicative algebras} as
a general solution to partial updates
\cite{gurevich:tcs2005}. It was shown that the problem of partial
updates over \emph{sequences} and \emph{labeled ordered trees} can
be solved in this more general algebraic framework, and the approach in
\cite{Gurevich-PartialUpdates01} was a special kind of an
applicative algebra.

\begin{definition}\label{def-applicative-algebra}\rm

\ An \emph{applicative algebra} consists of: (i) elements of a data type $\tau$, which include a \emph{trivial element} $\bot$ and at least one additional element, (ii) a monoid of particles over $\tau$ which include a \emph{null particle} $\lambda$ and the identity operation $id$, and (iii) a parallel composition $\Omega$ that, given an arbitrary finite multiset of particles, produces a particle. Each applicative algebra also needs to satisfy the following two conditions:

\begin{description}

\item[(1)] $p(\bot) = \bot$ for each particle $p$, and $\lambda(x) = \bot$ for every element $x$.

\item[(2)] $\Omega(\{\!\!\{p\}\!\!\}) =p$, $\Omega(M + \{\!\!\{id\}\!\!\}) = \Omega M$, and $\Omega (M + \{\!\!\{\lambda\}\!\!\}) = \lambda$.

\end{description}
A multiset $M$ of particles is called \emph{consistent} iff $\Omega M \neq \lambda$.

\end{definition}

Although applicative algebra provides a general framework for partial updates, the notion of particle is nonetheless not intuitive. Furthermore, the notion of location considered in these studies is the same as in the standard ASMs, which did not consider the subsumption relation between locations. Thus, the following definitions for partial updated were proposed in \cite{SchWan10}:

\begin{definition}\rm

A location $l_1$ {\em subsumes} a location $l_2$ if, for all states $S$, $eval(l_1,S)$
uniquely determines $eval(l_2,S)$.

\end{definition}

\begin{definition}\label{def-share_update}\rm

A \emph{partial update} is a triple
$(l, v, op)$ consisting of a location $l$, a value $v$, and a binary operator
$op$. Given a state $S$ and a single partial update $(l, v, op)$, we obtain a new state $S'$ by applying the partial update $(l, v, op)$ over $S$ and $eval(\ell, S')=op(eval(l,S),v)$.

\end{definition}

In the above definition, locations may subsume one another, i.e. one location is a substructure of another location. Intuitively, for a partial update $(l, v, op)$, the binary operator $op$ specifies how the value
$v$ partially affects the value of $l$ in the current state. When multiple partial updates are generated
to the same location simultaneously, a multiset $P_l$ of partial updates is obtained for the location $l$. The following definition of operator-compatibility ensures that partial updates to the same location are consistent in an update multiset.

\begin{definition}\label{C4:DefnCompatible}\rm

%\ Let $op_1$ and $op_2$ be two binary operators over the domain $D$. Then $op_1$ and $op_2$ are
%\emph{compatible} w.r.t. $D$ if they are associative,
%commutative and satisfy the following condition: $\forall x,y \exists z op_1(x,y)=op_2(x,z)$.
Let $P_l=\{\!\!\{(l,v_i,op_i) \mid i=1,...,k
\}\!\!\}$ be a multiset of partial updates to the same location $l$. Then $P_l$ is said to be \emph{operator-compatible} if,
for any two permutations $(\sigma_1,...,\sigma_k)$ and $(\pi_1,...,\pi_k)$ of $\{1,\dots, k\}$, we have the following for all $x$:

\medskip

\noindent $op_{\sigma_k}(...op_{\sigma_2}(op_{\sigma_1}(x, v_{\sigma_1}),v_{\sigma_2}),...,v_{\sigma_k}) =
op_{\pi_k}(...op_{\pi_2}(op_{\pi_1}(x, v_{\pi_1}),v_{\pi_2}),...,v_{\pi_k})$  

\medskip
\noindent An update multiset $P_l$ is \emph{consistent} if it is operator-compatible. 

\end{definition}

Based on the above definition, the following proposition is straightforward since, for an operator-compatible update multiset to the same location, applying its partial updates in any order yields the same result.

\begin{proposition}

If an update multiset $P_l$ is operator-compatible, then an order-independent sequential composition $\Theta$ of the partial update operations in $P_l$ (written as $\Theta P_l$) is equivalent to applying all the partial updates in $P_l$ sequentially in any order. That is, $\Theta P_l (x)=op_{\sigma_{|P_l|}}(...op_{\sigma_2}(op_{\sigma_1}(x, v_{\sigma_1}),v_{\sigma_2}),...,v_{\sigma_{|P_l|}})$ for any permutation $(\sigma_1,...,\sigma_{|P_l|})$ of $\{1,\dots, |P_l|\}$.

\end{proposition}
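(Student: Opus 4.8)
The plan is to read the claim as an almost immediate consequence of the definition of operator-compatibility (Definition~\ref{C4:DefnCompatible}), the only real content being that the phrase ``order-independent sequential composition'' is well defined. First I would fix the multiset $P_l=\{\!\!\{(l,v_i,op_i)\mid i=1,\dots,k\}\!\!\}$ and, for an arbitrary permutation $(\sigma_1,\dots,\sigma_k)$ of $\{1,\dots,k\}$, introduce the abbreviation
\[ F_\sigma(x) = op_{\sigma_k}(\dots op_{\sigma_2}(op_{\sigma_1}(x,v_{\sigma_1}),v_{\sigma_2}),\dots,v_{\sigma_k}) \]
for the result of applying the $k$ partial update operations to $x$ in the order prescribed by $\sigma$.

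Next I would invoke operator-compatibility directly: by Definition~\ref{C4:DefnCompatible}, for any two permutations $\sigma$ and $\pi$ we have $F_\sigma(x)=F_\pi(x)$ for all $x$, so the function $F_\sigma$ does not depend on the chosen permutation. I would then take this common function as the definition of the order-independent sequential composition, i.e. set $\Theta P_l := F_\sigma$ for any (equivalently, every) permutation $\sigma$. This is exactly the well-definedness that the adjective ``order-independent'' asserts, and with it the displayed identity $\Theta P_l(x)=op_{\sigma_{|P_l|}}(\dots op_{\sigma_2}(op_{\sigma_1}(x,v_{\sigma_1}),v_{\sigma_2}),\dots,v_{\sigma_{|P_l|}})$ holds for every permutation $\sigma$ simply by unfolding the definition of $\Theta P_l$.

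To be complete I would dispose of the two degenerate cases in line with the applicative-algebra conventions of Definition~\ref{def-applicative-algebra}: for $|P_l|=0$ the empty composition is the identity $id$, and for $|P_l|=1$ there is a single permutation and $\Theta P_l(x)=op_1(x,v_1)$, so both sides trivially agree. No induction on $k$ is really needed, since operator-compatibility already quantifies over all permutations at once; at most one notes that the defining equation of $F_\sigma$ is itself an iterated (sequential) application, which is the right-hand side of the claimed identity.

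There is essentially no hard step here: the proposition is a reformulation of the defining property of operator-compatibility, and the only thing to be careful about is to phrase $\Theta$ as the function whose existence (as a single, order-independent operation) is \emph{guaranteed} precisely by that property, rather than as an independently given object that would then have to be matched against the sequential applications. Should one instead wish to tie $\Theta$ to the abstract parallel composition $\Omega$ of the applicative-algebra framework, the same equality would be obtained by checking that $\Omega$ restricted to a consistent (here: operator-compatible) multiset reduces to iterated application; but for the present statement the direct argument above suffices.
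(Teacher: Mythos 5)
Your proposal is correct and follows essentially the same route as the paper, which dismisses the proposition as a straightforward consequence of Definition~\ref{C4:DefnCompatible}: operator-compatibility says precisely that all permutation orders yield the same result, so the sequential composition is well defined and equals any one of them. Your added care about well-definedness of $\Theta$ and the degenerate cases $|P_l|\le 1$ only makes explicit what the paper leaves implicit.
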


Therefore, if an update multiset $P_l$ is consistent, then all the partial updates in $P_l$ can be aggregated into one genuine update on the same location $l$.
 A state $S'$ can be obtained from $S$ by applying the multiset $P_l$ of partial updates sequentially, and we have $eval(l, S')=\Theta P_l(eval(l,S))$.

\end{appendix}

\bibliographystyle{abbrv}
\bibliography{partialupdate,TransactionAsmbib}

\end{document}